\newtheorem{thm}{Theorem}
\newtheorem{col}{Corollary}
\newtheorem{pro}{Proposition}
\newtheorem{clm}{Claim}
\newtheorem{lem}{Lemma}
\newtheorem{rem}{Remark}
\newtheorem{exmpl}{Example}
\begin{document}
\title{Optimized-Cost Repair in Multi-hop Distributed Storage Systems with Network Coding\\
}
\author{\dag Majid Gerami, \dag Ming Xiao, \dag Mikael Skoglund, \ddag Kenneth W. Shum,  *Dengsheng Lin \\ E-mail: \{gerami, mingx, skoglund\}@kth.se
\\ \dag Communication Theory, School of Electrical Engineering, KTH, Stockholm, Sweden \\
\ddag Institute of Network Coding, the Chinese University of Hong Kong, Shatin, Hong Kong,
wkshum@inc.cuhk.edu.hk\\
*National Key Lab of Communications, University of Electronics Science and Technology of China\\ linds@uestc.edu.cn}
\date{\today}

\maketitle

\begin{abstract} In distributed storage systems reliability is achieved through
redundancy stored at different nodes in the network. Then a data
collector can reconstruct source information even though some
nodes fail. To maintain reliability, an autonomous and efficient
protocol should be used to repair the failed node. The repair
process causes traffic and consequently transmission cost in the
network. Recent results found the optimal traffic-storage
tradeoff, and proposed regenerating codes to achieve the
optimality. We aim at minimizing the transmission cost in the repair
process. We consider the network topology in the repair, and accordingly modify  information flow graphs. Then we analyze the cut
requirement and based on the results, we formulate the
minimum-cost as a linear programming problem for linear costs.
We show that the solution of the linear problem establishes a fundamental lower
bound of the repair-cost. We also show that this bound is achievable for
minimum storage regenerating, which uses the optimal-cost
minimum-storage regenerating (OCMSR) code. We propose surviving node cooperation  which can efficiently reduce the
repair cost. Further, the field size for the
construction of OCMSR   codes is discussed. We show the gain of optimal-cost repair in tandem, star, grid and fully connected networks.

\end{abstract}

\begin{IEEEkeywords}
Distributed storage systems, regenerating codes, minimum-cost, surviving node cooperation, linear optimization.
\end{IEEEkeywords}

\section{Introduction}\label{sec:intro}

\IEEEPARstart{T}{he} increasing number of new applications such as voice and video over Internet, video on demands, user generated content (like YouTube service) have
caused a fast growth of traffic over the wired and wireless networks during
the last decades. This traffic growth requires  big data storages. Consequently, the idea of distributing  files among some storage nodes on the network has been proposed. High demand in QoS (quality of service) in e.g.,
fast and ubiquitous access, low-delay  and reliability poses high
requirements on distributed information storage. Compared to centralized storage systems, the benefits of
distributed storage systems include fast and ubiquitous access,
high reliability,  availability and scalability. For the benefits, recently distributed storage systems have been used in big data centers like Google file systems \cite{{Ghem01}}, Oceanstore \cite{Kub01} and Total Recall \cite{Bhagwan01}.

On another aspect, the distributed storage system has brought the higher reliability of the measured data in wireless sensor networks \cite{Dimk03}. Wireless sensor networks consist of several small devices
(sensors) which measure or detect a physical quantity e.g., temperature, pressure, light and so on. The main
characteristic of the sensors are limited battery, low
CPU power and limited communication capability \cite{Jain01}. Because of the limited communication capability of sensors,  communication between nodes is generally through multi-hop. That is, storage nodes may relay the  message of other nodes.
\begin{figure*}
 \centering
 \psfrag{n11}[][][1.5]{ $a_1$ }
 \psfrag{n12}[][][1.5]{$b_1$ }
 \psfrag{n21}[][][1.5]{ $a_2$ }
 \psfrag{n22}[][][1.5]{$b_2$ }
 \psfrag{n31}[][][1.5]{ $a_1+b_1+a_2+b_2$ }
 \psfrag{n32}[][][1.5]{$a_1+2b_1+a_2+2b_2$ }
 \psfrag{n41}[][][1.5]{ $a_1+2b_1+3a_2+b_2$ }
 \psfrag{n42}[][][1.5]{$3a_1+2b_1+2a_2+3b_2$ }
 \psfrag{n51}[][][1.5]{ $5a_1+7b_1+8a_2+7b_2$ }
 \psfrag{n52}[][][1.5]{$6a_1+9b_1+6a_2+6b_2$ }
 \psfrag{m1}[][][1.5]{ node 1 }
 \psfrag{m2}[][][1.5]{ node 2 }
 \psfrag{m3}[][][1.5]{ node 3 }
 \psfrag{m4}[][][1.5]{ node 4 }
 \psfrag{m5}[][][1.5]{ node 5 }
\psfrag{t1}[][][1.5]{ $p_1=a_1+2b_1$}
\psfrag{t2}[][][1.5]{$p_2=2a_2+b_2$}
 \psfrag{t3}[][][1.5]{$p_3=4a_1+5b_1+4a_2+5b_2$}
 \psfrag{p1}[][][1.5]{ $p_1$}
 \psfrag{p2}[][][1.5]{ $p_2$}
 \psfrag{p3}[][][1.5]{ $p_3$}
\psfrag{mu1}[][][1.5]{ $\times 1$}
 \psfrag{mu2}[][][1.5]{ $\times 2$}
 \psfrag{mu3}[][][1.5]{ $\times 3$}
  \resizebox{12cm}{!}{\epsfbox{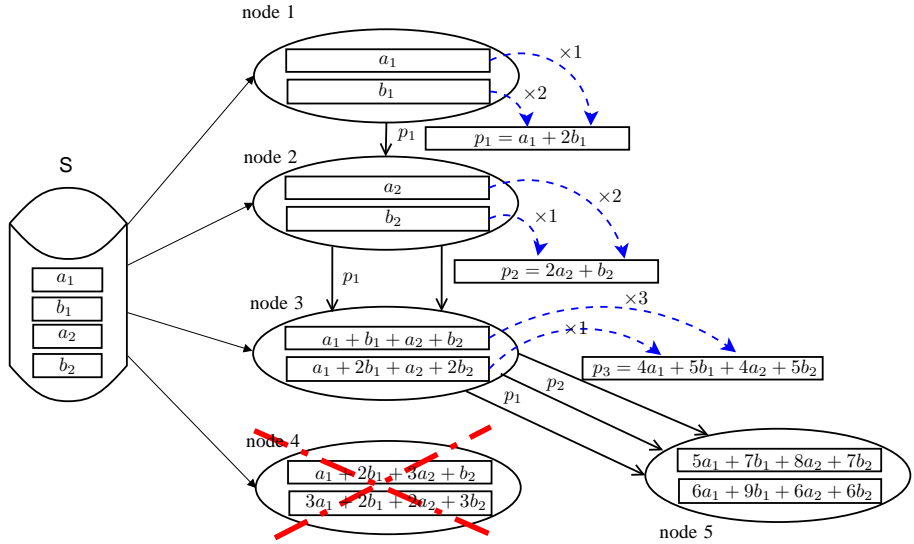}}
 \caption{A distributed storage system in a 4-node tandem network. Dashed lines denote how network codewords are formed. Node $4$ fails and node $5$
  is the new node. For regenerating a new node, $p_1, p_2, p_3$ are formed by linear combination of fragments in node $1$, $2$, and $3$, respectively. The underlying finite field is GF(11).}
  \label{Tandem-Flow}
\end{figure*}

 The storage nodes are vulnerable due to disk failure, power off, or a node leaving the system. Thus to make the
data reliable over the unreliable nodes, the data is encoded and
distributed among  storage devices  \cite{Dimk01}, \cite{Wu01}, \cite{Jain01}. Reliability is usually obtained through redundant nodes in a
distributed storage system, in which error control (EC) codes are
normally used to increase the storage efficiency. An  EC code with the
\textit{MDS (maximum distance separable)} property is optimal in term of
the redundancy-storage tradeoff. If a  file of size $M$
bits is coded by $(n,k)$-MDS codes ($k \leq n$) and distributed
among $n$ nodes, each node  stores $M/k$ bits. Then every $k$ nodes can reconstruct the original file. Nonetheless, if a node fails, and a new node joins the system, the new node needs to download $M$ bits to regenerate $M/k$ bits.  Thus, it may not be optimal considering the traffic for regenerating
a new node. In \cite{Dimk01}, \cite{Wu01}, the authors investigate the problem of
the repairing traffic (bandwidth) and find the optimal
storage-bandwidth tradeoff. A new class of erasure codes, namely
regenerating codes based on network coding
(\cite{InfoFlow}, \cite{KoMed}), are proposed in \cite{Dimk01},
\cite{Wu01} to achieve the optimal tradeoff.  In the repair process,
 the new node may not have the same encoded data as the failed node.  However,
 the restored data has the same MDS property (i. e., with the new node still every $k$ nodes can reconstruct the original file). We call this property as \textit{regenerating code property (RCP)}. This kind of repair
 is generally called functional repair, in contrast to the exact repair,
 where the new node stores the same data as the failed node. The exact regenerating of a new node is studied in \cite{Rashmi}, \cite{Shah02}, \cite{Pap01}.

  Two extreme scenarios in the fundamental storage-bandwidth tradeoff of regenerating codes \cite{Dimk01} correspond to two
  types of regenerating codes: one scenario, namely the minimum storage regenerating (MSR) code, uses
  the same amount storage space as the MDS code but with lower repair traffic.
  Another scenario, namely the minimum bandwidth regenerating (MBR) code has slightly more storage per node, but its repair
  bandwidth is considerably lower than the MSR (MDS) codes \cite{Dimk01}.  To further decrease the repair traffic in the scenario of multiple
simultaneously-failed nodes, reference \cite{Yuchong01} proposes new node cooperative
repair. In new node cooperative regenerating codes, the new nodes cooperate to minimize the repair bandwidth. The difference between new node cooperative regenerating and our method is that in our approach surviving nodes cooperate, i.e., surviving nodes  can combine the received data with their own information. These two approaches together can provide a fully cooperative regenerating process.

Though references \cite{Dimk01}, \cite{Wu01}, \cite{Yuchong01},
\cite{Rashmi} have well addressed the optimal storage-bandwidth
tradeoff in distributed systems from different aspects, the link
cost (transmission cost in channels) and the impact of the network
topology have not been considered. In a practical system, the cost
is an important design consideration and different links
(channels) of a network may have different costs. The topology of
networks may also impact the repair cost. Looking through two main applications for distributed storage systems in data centers and wireless sensor network, we find that data centers have hierarchical network structures \cite{Benson} and wireless sensor networks have multi-hop network structures \cite{Kong01}. Recently, reference
\cite{akh01} considers the transmission cost from surviving
nodes to the new node and the cost-bandwidth tradeoff is
derived. This model  suits for the scenarios in which there are direct links from surviving nodes to the new node. Yet, this model does not exploit the network structure. For instance, consider two storage nodes that have direct links to the new node as well as a link between themselves. Suppose these surviving nodes want to send data to the new nodes. Instead of  direct communication, the surviving nodes could cooperate and send an aggregated data to the new node. Thus, we propose the idea of surviving nodes cooperation in order to reduce the transmission cost. This problem is also interesting in distributed storage systems in wireless sensor networks (WSN), where reducing transmission cost is demanding. In  \cite{Maj02} it has been shown that the optimal-cost repair can be  found also in a decentralized approach which suits for WSN due to the lack of a central node and CPU power limitation of nodes.

Other related work includes followss. Reference \cite{Li01}
proposes a tree-structure algorithm for minimum storage regenerating (MSR) codes. The work considers a network having links with different bandwidth (capacity). Then by exploiting Prim's algorithm, they  find a repair approach which minimizes the repair time (i.e., finding routes in network with the highest capacity to maximize the speed of repair process). Our work is distinct from \cite{Li01} since we  minimize the repair-cost instead of repair-time. Repair-cost can be formulated as the summation of costs on all links in network, while repair-time depends on the bottleneck of a route.


 We shall study the repair-cost in multi-hop networks where links may have  different transmission costs.  To formulate the problem, we modify the information flow graph by taking into considering the network topology. Then, by cut-set bound analysis and solving a linear programming problem, we derive  a lower bound of repair
costs. In distributed storage systems, networks evolve along with time. Thus, there is infinite stages of repair. In general, one should run cut-set bound analysis in infinite stages to find the optimum cost.  We first give a lower bound of repair-cost by cut-set analysis in only first stage of repair.  Later, we prove that this lower bound is tight for minimum storage regenerating codes which are the optimum-cost minimum storage regenerating codes. To the best of our knowledge, we are the first who consider \textit{surviving node
cooperation (SNC)} scheme to reduce the repair cost. SNC allows intermediate nodes in the network to combine their received fragments from other surviving nodes and their own stored data.
For a general setting, we formulate the minimum-cost problem
and study the solution. Related work includes
\cite{Lun01}, in which the minimum-cost multicast problem is
studied in networks with network coding. Reference \cite{Jiang01} studies the
transmission cost of distributing a source file among nodes on the
network with or without network coding.
It is shown in \cite{Jiang01} that by file splitting the optimality can be
achieved. Here we assume a source file is already
distributed among different nodes. Our objective is to find the optimum
repair process when a storage node fails, i.e., how to regenerate a new node with the minimal cost. It may worth to note that our approach holds for every network topology. The algorithm takes the cost of transmitting one unit of information between each pair of nodes as its input and find the optimum cost repair. In addition, we try to find a closed form for the repair cost for some specific topologies like tandem, star, grid, and fully connected networks.

The remainder of the paper is organized as follows. In Section
\ref{Sec:sysMod}, we will give a motivation example by a specific
 network. For more general networks, we formulate optimization
problem in Section \ref{sec:PrbFrm}.  In Section \ref{sec:code-construction}, we
discuss the optimal-cost repair when the storage capacity per node is $\alpha= M/k$. In Section
\ref{sec:lowerbound} we investigate the  fundamental lower bound of repair costs in tandem, star, grid, and fully connected networks and present the gain of the optimal-cost repair in those networks.
Finally, in Section \ref{sec:conclusion}, we conclude the
paper.

\begin{table}[ht]
\caption{Notations} 
\centering 
\begin{tabular}{c l } 
\hline\hline 
Symbol & Definition \\[0.5ex] 
\hline 
$n$ & number of storage nodes in a distributed storage\\ & system\\ 
$k$ & minimum number of nodes needed to- \\ & reconstruct the original file \\
$d$ & number of surviving nodes for a repair process \\
$\alpha$ & storage amount of individual node \\
$\beta$ & amount of data downloads from each \\ & surviving node in the repair process \\
[1ex]
$M$ & original information file size\\
$\underline{s}$ & $M\times 1$ vector denoting the original file \\
$\underline{Q}_{i}$ & $M\times \alpha$ matrix denoting coding  \\ & coefficients of node $i$\\
$\underline{X}_{i}$ & $\alpha \times 1$ vector denoting the content of node $i$,\\ i.e., $\underline{X}_{i}=\underline{Q}_{i}^{T}\underline{s}$ \\
$\sigma_{non-opt}$ & repair cost by non-optimized-cost approach\\ &(repair-cost without SNC)\\
$\sigma_c$ & repair cost by optimized approach\\ &(repair-cost with SNC and optimization)\\
$g_c$ & optimization gain, $g_c=\sigma_{non-opt}/\sigma_c$\\
%
\hline 
\end{tabular} \label{table:notation} 
\end{table}

\begin{figure}
 \centering
 \psfrag{a}[][][1.5]{ $\alpha$ }
 \psfrag{b}[][][1.5]{ $\beta$ }
 \psfrag{i}[][][1.5]{ $\infty$ }
 \psfrag{m1}[][][1.5]{ node 1 }
 \psfrag{m2}[][][1.5]{ node 2 }
 \psfrag{m3}[][][1.5]{ node 3 }
 \psfrag{m4}[][][1.5]{ node 4 }
 \psfrag{m5}[][][1.5]{ node 5 }
 \psfrag{x11}[][][1.5]{$x_{in}^1$ }
 \psfrag{x12}[][][1.5]{$x_{out}^1$ }
 \psfrag{x21}[][][1.5]{$x_{in}^2$ }
 \psfrag{x22}[][][1.5]{$x_{out}^2$ }
 \psfrag{x31}[][][1.5]{$x_{in}^3$ }
 \psfrag{x32}[][][1.5]{$x_{out}^3$ }
 \psfrag{x41}[][][1.5]{$x_{in}^4$ }
 \psfrag{x42}[][][1.5]{$x_{out}^4$ }
 \psfrag{x51}[][][1.5]{$x_{in}^5$ }
 \psfrag{x52}[][][1.5]{$x_{out}^5$ }
 \resizebox{8cm}{!}{\epsfbox{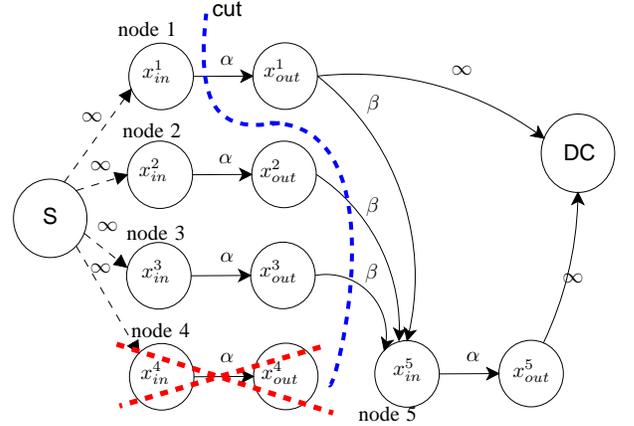}}
 \caption{Cut-set bound analysis in the information flow graph. Node $4$ fails and node $5$ is the new node. There exist direct links from surviving nodes to the new node.}
 \label{cut}
\end{figure}

\section{Motivating Example}\label{Sec:sysMod}

We first give an example to illustrate the motivation. Consider a
distributed storage system in a four-node tandem network shown in
Fig. \ref{Tandem-Flow}. Here we follow the notation of the
regenerating codes \cite{Dimk01}, \cite{Wu01}. The notations are also
given in Table \ref{table:notation}. We divide a source file into equal size fragments (packets). A file of size $4$
mega-bits ($ M =4$ fragments;  One fragment has $1$ mega-bit) is coded with a regenerating code
\cite{Dimk01} and distributed among $4$ nodes ($n=4$). Each node
stores $2$ one-mega-bit fragments ($\alpha = 2$) and the source
file can be reconstructed by any $2$ nodes ($k = 2$). When a node
fails (say node $4$), a new node downloads $\beta$ fragments from
each of $3$ surviving nodes ($d = 3$).

If the new node has direct links to all the surviving nodes,  the distributed storage system can be
represented by a directed acyclic graph with a failure/repair on node $4$, as shown in Fig. \ref{cut}. The graph is known as an information flow graph \cite{Dimk01}. In the graph, there is a source node
($S$) connected to storage nodes through
infinite-capacity links. Each storage node is depicted by input
($in$) and output ($out$) nodes connecting by a $\alpha$-capacity link. When a node fails, a new node downloads
$\beta$ fragments from each node. A $DC$ denoting a data collector
reconstructs the source file from any $k$ storage nodes.
\begin{figure*}
 \centering
 \psfrag{n11}[][][1.5]{ $a_1$ }
 \psfrag{n12}[][][1.5]{$b_1$ }
 \psfrag{n21}[][][1.5]{ $a_2$ }
 \psfrag{n22}[][][1.5]{$b_2$ }
 \psfrag{n31}[][][1.5]{ $a_1+b_1+a_2+b_2$ }
 \psfrag{n32}[][][1.5]{$a_1+2b_1+a_2+2b_2$ }
 \psfrag{n41}[][][1.5]{ $a_1+2b_1+3a_2+b_2$ }
 \psfrag{n42}[][][1.5]{$3a_1+2b_1+2a_2+3b_2$ }
 \psfrag{n51}[][][1.5]{ $2a_1+3b_1+3a_2+2b_2$ }
 \psfrag{n52}[][][1.5]{$3a_1+6b_1+3a_2+3b_2$ }
 \psfrag{m1}[][][1.5]{ node 1 }
 \psfrag{m2}[][][1.5]{ node 2 }
 \psfrag{m3}[][][1.5]{ node 3 }
 \psfrag{m4}[][][1.5]{ node 4 }
 \psfrag{m5}[][][1.5]{ node 5 }
\psfrag{t1}[][][1.5]{ $p_1=a_1+2b_1$}
\psfrag{t2}[][][1.5]{$p_2=a_1+2b_1+2a_2+b_2$}
 \psfrag{t3}[][][1.5]{$p_3=2a_1+4b_1+a_2+b_2$}
\psfrag{t2}[][][1.5]{$p_2=a_1+2b_1+2a_2+b_2$}
 \psfrag{t3}[][][1.5]{$p_3=2a_1+4b_1+a_2+b_2$}
 \psfrag{t4}[][][1.5]{$p_4=2a_1+3b_1+3a_2+2b_2$}
 \psfrag{t5}[][][1.5]{$p_5=3a_1+6b_1+3a_2+3b_2$}
 \psfrag{p1}[][][1.5]{ $p_1$}
 \psfrag{p2}[][][1.5]{ $p_2$}
 \psfrag{p3}[][][1.5]{ $p_3$}
\psfrag{p4}[][][1.5]{ $p_4$}
 \psfrag{p5}[][][1.5]{ $p_5$}
 \psfrag{mu1}[][][1.5]{ $\times 1$}
 \psfrag{mu2}[][][1.5]{ $\times 2$}
 \psfrag{mu3}[][][1.5]{ $\times 1$}
 \resizebox{12cm}{!}{\epsfbox{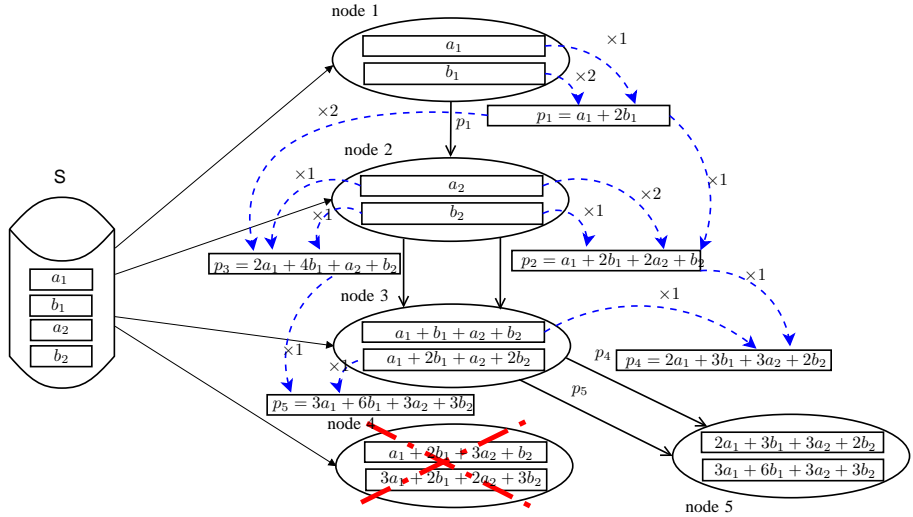}}
 \caption{Regenerating by surviving node cooperation in a tandem network. Here, for regenerating a new node, fragment $p_1$ is formed by combining fragments of node $1$. Fragments $p_2, p_3$ are formed by linear combination of fragment $p_1$ and stored fragments in node $2$. Finally, fragments $p_3, p_4$ are formed by linear combination of $p_2,p_3$, received fragments on node $3$, with the stored fragments on node $3$.}
  \label{Tandem-NC}
\end{figure*}

By cut-set bound analysis on the information flow graph, it is shown that there is the fundamental storage-bandwidth tradeoff to regenerate a new node \cite{Dimk01}. A cut
in the flow graph refers to a set of edges in which  network nodes are
divided into two separated parts; one part contains $S$  and
another part contains the DC. The value of a cut is the sum of
capacities of edges from the source to the destination direction. Then, the corresponding cut must meet a fundamental requirement resulting from min-cut max-flow theorem for multicast networks  \cite{InfoFlow}. The theorem expresses that in a multicast network  destinations can  reconstruct the source
file whenever all the cuts between source and destinations are greater than
or equal to the source file size.

By the cut-set bound analysis of the information flow  graph in Fig. \ref{cut},
we can see that for a DC to reconstruct the source
file, it requires $\alpha+2\beta\geq M$. For $\alpha=2$,  $M=4$,
then $\beta \geq 1$. That is, the new node must download at least
$\beta=1$ mega-bits from each surviving node. The optimal-traffic
repair to achieve this lower bound is to download $p_1, p_2, p_3$
as in Fig. \ref{Tandem-Flow} from node $1$, node $2$ and node $3$,
respectively. Here $p_1$, $p_2$ and $p_3$ are formed by linear
coding at nodes 1, 2, and 3, respectively. Yet, the analysis is
different if we consider the link cost.

We assume that each fragment of $1$ mega-bits transmitted in a
single channel costs one transmission unit. To reach the new node
(node $5$), we can easily see that $p_1$ passes the route
(node $1 \mapsto$ node $2 \mapsto$ node $3
\mapsto$ node $5$) with a cost of $3$ units, and $p_2$ passes node $2\mapsto$ node 3 $\mapsto $ node $5$
with a cost of $2$ units, and $p_3$ passes node $3 \mapsto $ node $5$
with a cost of $1$ unit. Thus, the total cost in the repair is $6$
units. In this paper, we call \textit{non-optimized-cost repair} to the approach which minimizes the repair-bandwidth (without optimizing the cost) and  the corresponding cost  is denoted as  $\sigma_{non-opt}$).

Non-optimized-cost repair may be optimal in term of repair bandwidth. Yet it might not be optimal in terms of the repair cost. For example, consider the repair scheme in Fig. \ref{Tandem-NC}, where we allow surviving node cooperation (SNC). Here the cooperation
means that one surviving node can combine/encode the coded symbols of
another node. That is, surviving nodes are allowed to linearly
combine their own fragments with the received fragments from other nodes. For
instance, at node $2$, $p_2$ and $p_3$ are encoded from the
received $p_1$ from node $1$ and stored fragments in node $2$. At node
$3$, $p_4$ and $p_5$ are obtained by encoding $p_2$ and $p_3$ with
the fragments of node $3$. Note that here we only consider functional
repair, in which the regenerated node may not be identical to
the failed node but it has the same code property. That is, with
the new node (node $5$), any $2$ out of $4$ nodes can reconstruct the source.
It is easy to see that the cost of repair is reduced to $5$ units
as shown in Fig. \ref{Tandem-NC} (only two fragments are transmitted
from node $3$ to node $5$). The example shows SNC can reduce the transmission cost. We note that the repair approach in Fig. \ref{Tandem-NC}  still is not optimal in term of the transmission cost. We shall show in the next section that the optimal repair cost in this example is $4$. We further note that SNC can be applied
to the scenario of one or multiple node failure, but the
 cooperative regenerating codes in \cite{Yuchong01} and \cite{Ken01} can only be used for the  scenario of
multiple node failure.

\section{Problem Formulation}\label{sec:PrbFrm}
In this section, we first introduce the cost matrix which presents the cost of transmitting one unit of data (e.g., one fragment) between any pair of helper nodes. In our approach surviving nodes cooperate in the repair process. Thus, we assume storage nodes are capable of performing  linear calculation in  finite fields. The number of helper nodes, which is denoted as $d$, is assumed to be greater than $k$, i.e., $d\geq k$ (for $d < k$ successful regenerating is not possible; see details in \cite{Dimk01}). Furthermore, we consider the network structure in the information flow graph and  formulate the optimal-cost problem. Our main tool for analysis would be cut-set bound analysis in multicast networks, proposed in seminal paper \cite{Raymond}. In the literature, the process
that a node fails and a new node is regenerated is called a stage
of repair. Evolving network in a distributed storage system (infinite time leaveing/joining of nodes) implies infinite number of repair stages. Cut-set bound analysis with considering  infinite stages of repair in a heterogenous network even for a distributed storage system with small number of storage nodes would be complicated. We thus try to find a lower bound of repair-cost by cut-set bound analysis in the first stages of repair. We later show that this bound can be achieved for the minimum storage regenerating codes, i.e., $\alpha=M/k$. In the literature, the process
that a node fails and a new node is regenerated is called a stage
of repair.

\subsection{Network Setup}
Above,  for a specific network, we have studied the repair cost and proposed SNC to
reduce the cost. Naturally, we may ask what
is the optimal-cost and how to achieve the optimality for more
general scenarios. In this section, we first formulate  a linear
optimization problem which establishes a fundamental lower bound on the repair cost for general networks.

Assume in a distributed storage systems with $n$ nodes, node $n$ fails and $d$ number of surviving nodes ($k\leq d \leq n-1$) help to regenerate a new node. We also assume that network topology and the cost of transmitting one unit of data between nodes are given. From the network topology the pathes from helper nodes to new node are known. To study the optimal-cost repair problem for
a given network, assuming  failure on node $n$, we  define an $d\times n$ cost matrix $\underline{C}$, as follows,
\[ \mbox{\emph{\underline{C}= }} \left(
\begin{array}{cccc}
$0$ & c_{(12)} &\ldots & c_{(1n)} \\
c_{(21)} & $0$ &\ldots & c_{(2n)} \\
\vdots & \vdots & \ddots & \vdots \\
c_{(d1)} &\ldots &\ldots & c_{(dn)}
 \end{array} \right),\]
where an element $c_{(ij)}$ denotes the link cost of transmitting one unit fragment from node $i$ to
node $j$. For instance, $c_{(12)}$ is the cost of transmitting one fragment from node $1$ to node $2$, and $c_{(in)}$ is the corresponding cost from surviving node $i$ to the new node (assuming node $n$ fails). $c_{(ij)}$ is nonzero  if there is  direct link
from node $i$ to node $j$. Otherwise, $c_{(ij)} =\infty$. We assume that an algorithm extracts the matrix $\underline{C}$ from a given network topology and link costs. Clearly, by matrix $\underline{C}$, we can calculate the costs of a path (maybe multi-hop) between any pair of nodes in the network.
Here, we assume there is at least one path between any pair of nodes. Furthermore, we assume that the network is delay-free and acyclic. In this paper, we only consider
linear costs. That means if the transmission cost of one fragment from node
$i$ to $j$ is $c_{(ij)}$, then it costs $m c_{(ij)}$ to
transmit $m$ fragments from node
$i$ to $j$.

To investigate the repair-cost in a given network, we modify
the information flow graph in \cite{Dimk01} by introducing the network
topology and link costs into the graph.

\subsection{Modified Information Flow Graph }

Consider a storage system with the original file of size $M$
distributed among $n$ nodes in which each node stores $\alpha$
fragments and any $k$ of $n$ nodes can rebuild the original
file. We denote the source file with an $M\times1$ vector,
$\underline{s}$. Then, the code on  each node $i$ can be represented by a matrix
 $\underline{Q}_i=(\underline{q}_i^1,\cdots,\underline{q}_i^{\alpha})$ of size $M\times\alpha$ where each
 column ($\underline{q}_i^j$) represents the code coefficients of  fragment $j$
 on node $i$, and then the stored data in node $i$ is $\underline{X}_i=\underline{Q}_i^{T}\underline{s}$.

We can denote the flow of
information in a distributed storage system by a directed acyclic
graph  $G(n,k,\alpha)=G(N,A)$, where $N$ is the set of
nodes and $A$ is the set of directed links.
Similar to \cite{Dimk01}, graph $G(n,k,\alpha)$ consists of three
different nodes: a source node, storage nodes and several data collectors
($DC$). The source node distributes the original file among
storage nodes along with the  (assumably) infinite-capacity links.
Every storage node can be denoted by input ($in$) and output ($out$) nodes
connecting by a link of capacity equals to the amount of node's
storage ($\alpha$). Finally, the $DC$ reconstructs the original file by connecting to at least
$k$ storage nodes via the infinite-capacity links and then solving  linear equations. Contrary to \cite{Dimk01}, there might not exist direct links from surviving nodes to the new node and storage nodes may relay other nodes' fragments to the new node. When a node fails, $d$ surviving nodes participate in the repair process ($k \leq d \leq n-1$). Here,  an optimization algorithm shall determine the optimum traffic on the links of the network. We note that in distributed storage networks, node failure  and new node generation may happen infinite times. In a heterogenous network, considering all stages of repair makes the problem complicated. Hence, in our study we focus on the first stage of repair, which is a performance lower bound for infinite repair stages. As an example, the modified information flow
graph for the first stage of repair on node $4$ has been shown in Fig.~\ref{modifiedgraph}. In this example, nodes are connected in a tandem network.
\begin{figure}
 \centering
 \psfrag{s}[][][1.5]{ S }
 \psfrag{a}[][][1.5]{ $\alpha$ }
 \psfrag{i}[][][1.5]{ $\infty$ }
 \psfrag{m1}[][][1.5]{ node 1 }
 \psfrag{m2}[][][1.5]{ node 2 }
 \psfrag{m3}[][][1.5]{ node 3 }
 \psfrag{m4}[][][1.5]{ node 4 }
 \psfrag{m5}[][][1.5]{ node 5 }
 \psfrag{z12}[][][1.5]{ $z_{(12)}$ }
 \psfrag{z23}[][][1.5]{ $z_{(23)}$ }
  \psfrag{z35}[][][1.5]{ $z_{(35)}$ }
 \resizebox{8cm}{!}{\epsfbox{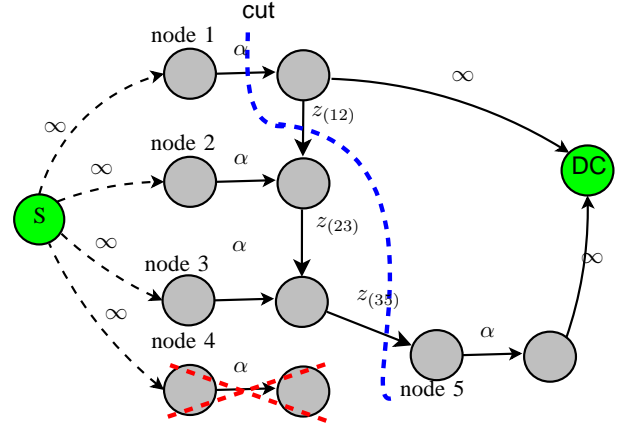}}
\caption{Modified information flow graph for the first stage of repair in a  tandem network. The cut mentioned on the figure with heavy dotted-line corresponds to the inequality $  z_{(35)}+\alpha \geq M$.}
 \label{modifiedgraph}
\end{figure}

We use a column vector to denote the number of fragments
transmitted on the directed links. This vector is termed as a
\emph{subgraph} ($\underline{z}=[z_{(ij)}]_{\mid_{(ij)\in A}}$)
\cite{Lun01}, where each element $z_{(ij)}$ is a non-negative integer and represents the number of fragments from node $i$ to node $j$ for $(ij)\in A$. For the subgraph
$\underline{z}=[z_{(ij)}]_{\mid_{(ij)\in A}}$, and defining $c_{(ij)}$ as the cost of transmitting one unit of data (e.g., one fragment) from node $i$ to node $j$, the repair cost can be formulated as,
\begin{equation}
\sigma_c(\underline{z}) \triangleq \sum_{(ij)\in A} c_{(ij)}z_{(ij)}.
\end{equation}
Our objective
is to minimize the cost ($\sigma_c)$ considering only the first stages of repair.

\rem{We note that in the repair problem there would be infinite number of repair stages. However,   cut analysis in the first stage of repair gives the necessary  conditions for finding the optimum-cost regenerating approach. The analysis  gives a lower bound of repair-cost for the repair. Later, we show that this lower bound can be achieved for $\alpha=M/k$. Thus, in our analysis there would be limited number of cut-set constraints. Clearly, the number of cut constraints  depends on the number of nodes $n$ and $k$.}

\subsection{Cut-set analysis in the first stage of repair}
\subsubsection{Constraint Region} \label{constregion}
In the repair process, it is required that any $k$ nodes can
reconstruct the original file. This property is known as the
regenerating code property (RCP).  The RCP must be preserved for an arbitrary number of stages of repair.
 For this, it is necessary for the first stage of repair all the cut constraints to be satisfied.  Thus, we find the minimum $\sigma_c$ under the constraint that in
 the first stage of repair, all cuts of connecting the $DC$ to the new node
 and $k-1$ other nodes must be greater than or equal to $M$. In Fig. \ref{modifiedgraph}, the heavy dotted line presents a cut when the DC connects to the new node and node $1$ for a regenerating code with parameter $k=2$. The cut constraint relating to this cut can be formulated by the inequality: $ z_{(35)}+\alpha \geq M$. By assuming vector $\underline{z}=[z_{(12)} z_{(23)} z_{(35)}]^T$, we can express the inequality in vector space as, $(0, 0 , 1) \underline{z}\geq M-\alpha$.   Assume  $r$  cut constraints on the first stage of repair. Denoting $|A|$ the cardinality of existing edges between nodes, we form all the inequalities in a matrix form, by defining an $r \times |A|$ dimension matrix $L$ (this matrix  is called \emph{coefficient matrix} in the literature \cite{Optbook}). The corresponding inequalities induced by the cut constraints show a region in a multi-dimensional space that the subgraph must satisfy to be a feasible solution. This region is often called \textit{polytope} \cite{Optbook}. Consequently
the polytope is
\begin{equation}
\Psi=\{ \underline{z}=[z_{(ij)}] \mid  z_{(ij)} \geq 0, \underline{L} \text{ } \underline{z}\geq \underline{b}
\}, \label{eqn:constraint}
\end{equation}
where the comparison of two vectors e.g, $\underline{a} \geq
\underline{b}$ means every element in $\underline{a}$ is greater
than or equals to the element in $\underline{b}$ at the same
position.

\begin{exmpl} In Fig. \ref{modifiedgraph}, if the DC connects to node $1$
and the new node to rebuild the source ($k = 2$),
the first  cut constraint is
\begin{equation}
z_{(35)} \geq M-\alpha.
 \end{equation}
The second constraint follows if we connect the DC with  node $2$
and the new node,

 \begin{equation}
z_{(12)}+ z_{(35)} \geq M-\alpha.
 \end{equation}
Similarly, when the DC connects to node 3 and the new node, we have
the third constraint
\begin{equation}
z_{(23)} \geq M-\alpha.
 \end{equation}
\end{exmpl}
Thus, we can form all these inequalities in a matrix form as follows,
\begin{equation}\underbrace{
\left( \begin{array}{ccc}
0 &0 &1\\
1 &0 &1\\
0 &1 &0
 \end{array}\right)}_{\underline{L}}  \underbrace{\left[ \begin{array}{l} z_{(12)}\\ z_{(23)}\\ z_{(35)} \end{array} \right]}_{\underline{z}} \geq \underbrace{ \left[ \begin{array}{l} M-\alpha\\ M-\alpha\\ M-\alpha \end{array}\right]}_{\underline{b}}. \label{Lexample}\end{equation}.
\rem We note that in the cut-set bound analysis there would be some cut constraints that do not affect the polytope region. These are called non-active constraints. In the example above we only consider the active constraints.

\begin{rem}
The polytope $\Psi$ is restricted by linear inequalities. Hence, if
$z_{(ij)}$s are real numbers then the constraint region $\Psi$ is
convex. We can reasonably assume that $z_{(ij)}$s are real
numbers. Note that the file is measured by bits (integer) but it is normally
quite large. Hence we can consider $z_{(ij)}$s as real values since one
fragment has lots of bits. Following this assumption, $\Psi$
constitutes a convex region.\end{rem}

\subsubsection{Linear Optimization} \label{sec:linearOpt}

Since the constraint region and objective function in the repair problem are linear,   the problem is a linear
optimization problem which can be solved efficiently. Finally,  we can
formulate the optimization problem as,
\begin{equation}
 \begin{array}{lc}
 \mbox{\text{minimize}} &  \sigma_c(\underline{z})=\sum_{(ij)\in A} c_{(ij)} z_{(ij)}  \\
 \mbox{\text{subject to}} & \underline{L} \text{ } \underline{z}\geq \underline{b},  \\
 & z_{(ij)}\geq $0$.
 \end{array}
 \label{opt-lin}
 \end{equation}

The linear programming in  (\ref{opt-lin}) results in a lower bound of repair costs.

\begin{pro} The repair-cost calculated by
 problem (\ref{opt-lin}) is the lower bound of the repair-cost.\end{pro}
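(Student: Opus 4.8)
The plan is to show that any valid regenerating scheme respecting the network topology must induce a subgraph $\underline{z}$ that lies in the polytope $\Psi$, so that the optimal value of the linear program \eqref{opt-lin} cannot exceed the cost of any genuine repair scheme. The argument proceeds by contraposition to the achievability direction: rather than constructing an optimal code, I would argue that feasibility of the min-cut max-flow requirement forces every admissible subgraph to satisfy the cut constraints $\underline{L}\,\underline{z} \geq \underline{b}$, and hence the minimum of the linear objective over $\Psi$ is a quantity no repair scheme can beat.

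First I would recall the network-coding multicast theorem invoked earlier (from \cite{InfoFlow}, \cite{Raymond}): a data collector can reconstruct the source file of size $M$ if and only if every cut separating the source $S$ from that $DC$ has value at least $M$. In the modified information flow graph $G(N,A)$, fix any repair scheme that successfully regenerates the new node while preserving the regenerating code property (RCP). By RCP, the $DC$ connected to the new node together with any $k-1$ surviving nodes must recover $\underline{s}$. I would enumerate exactly these $DC$ placements, and for each one identify the cut that isolates the new node's incoming traffic together with the $\alpha$-capacity links of the chosen $k-1$ storage nodes. Each such cut value, expressed in terms of the flows $z_{(ij)}$ on the links carrying data toward the new node plus the $(k-1)\alpha$ contributed by the fully-connected storage nodes, yields precisely one inequality of the form $\text{(row of }\underline{L})\,\underline{z} \geq M - (k-1)\alpha$ — i.e. one row of the system $\underline{L}\,\underline{z}\geq\underline{b}$ as built in \eqref{eqn:constraint} and illustrated in \eqref{Lexample}. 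Since the scheme is assumed to work for \emph{all} choices of the $k-1$ auxiliary nodes, its subgraph $\underline{z}$ must satisfy every such constraint simultaneously, so $\underline{z}\in\Psi$.

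Given this, the conclusion follows immediately: the cost $\sigma_c(\underline{z})=\sum_{(ij)\in A} c_{(ij)} z_{(ij)}$ of the working scheme is the objective value of \eqref{opt-lin} evaluated at a feasible point $\underline{z}\in\Psi$, and therefore is at least the minimum value attained by the linear program. Because this holds for every RCP-preserving repair scheme that respects the topology, the optimum of \eqref{opt-lin} lies below the cost of every achievable repair, which is exactly the assertion that it is a lower bound. I would also note the caveat already flagged in the remarks: the cuts are imposed only in the first stage of repair, so strictly speaking the bound is a relaxation — enforcing fewer constraints than the true infinite-horizon problem can only lower the optimum — which reinforces rather than weakens the lower-bound claim.

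The main obstacle, and the step requiring the most care, is the precise translation of each $DC$-reconstruction requirement into a single linear inequality: one must verify that the \emph{minimum} cut for a given $DC$ placement is the one whose value is $(\text{row of }\underline{L})\,\underline{z} + (k-1)\alpha$, rather than some smaller cut that would tighten the constraint, and that the infinite-capacity source and $DC$ links never participate in a binding cut. A secondary subtlety is confirming that the storage contribution of each auxiliary node is exactly $\alpha$ (its $in$–$out$ link capacity) and that the new node's incoming flow is correctly captured by the relevant $z_{(ij)}$ terms; once these cut valuations are pinned down, the feasibility-implies-membership argument and the lower-bound conclusion are routine.
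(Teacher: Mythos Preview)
Your proposal is correct and follows essentially the same route as the paper: both argue that the first-stage cut constraints are necessary conditions for any RCP-preserving repair, so the feasible region $\Psi$ (your $\Psi$, the paper's $\Psi^1$) contains every admissible subgraph, and minimizing the linear cost over this relaxed polytope can only undercut the true repair cost. The paper phrases this as the polytope containment $\Psi^t\subseteq\Psi^1$ across repair stages, whereas you phrase it operationally (any working scheme yields a feasible point of the LP), but the logical content is the same, and your final paragraph on the relaxation caveat matches the paper's argument almost verbatim.
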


\begin{proof} In a general network, cut-set bound analysis considering infinite stages of repair is needed to  have all the constraint on $z_{(ij)}$s. This is the necessary and sufficient conditions on $z_{(ij)}$ for the repair. Thus, cut-set bounds considering only the first stage of repair gives only necessary conditions, and  in general they might not be sufficient conditions. Denote the polytope resulting from cut-set analysis on the first stage as $\Psi^1$, and  the polytope given the cut-set constraints till stage $t$ as $\Psi^t$, for $t$ as a positive integer number. We have $\Psi^t \subseteq \Psi^1$, because for any $\underline{z}  \in \Psi^t$, satisfying constraint till stage $t$, $\underline{z}$ have to satisfy the cut constraints at first stage, i.e., $\underline{z}  \in \Psi^1$. Denote $\sigma_{opt}^{t}$ as the minimum value of $\sigma_c$ satisfying all the cut constraints till stage $t$. In other words, $\sigma_{opt}^{t}$  corresponds to the minimum $\sigma_c$ on which plane  $\sigma_c=\sum_{(ij)\in A} c_{(ij)}z_{(ij)} $ intersects polytope $\Psi^t$.  Since $c_{(ij)}$s are non-negative real numbers, $\Psi^t \subseteq \Psi^1$, the minimum-cost given the cut constraints till stage $1$ is not greater than the minimum-cost given the cut constraints till stage $t$. That is $\sigma_{opt}^{1} \leqslant \sigma_{opt}^{t}.$
\end{proof}
 \subsubsection{Example for the 4 node tandem network}\label{sec:LpExamp1}

Now we have enough tools to find the optimum repair cost in the motivating example (Fig.~\ref{Tandem-Flow}). We assume three nodes joining the repair process ($d=3$), $M=4, k=2, \alpha=2$, and the corresponding cost matrix is,
\begin{equation}
\underline{C}=
\begin{pmatrix}
0 &1 &\infty &\infty\\
\infty &0 &1 &\infty\\
\infty &\infty &0 &1\\
\end{pmatrix}.
\end{equation}
Clearly there exists a direct link between nodes $1$ and $2$ with one unit cost, and similarly
between nodes $2$ and $3$, and nodes $3$ and $5$ (new node).
Now we analyze the constraint region for $M=4, \alpha=2 $ in (\ref{Lexample}).
Hence, we can formulate the problem as
\begin{equation} \begin{array}{lc}
\mbox{\text{minimize}} & \sigma_c(\underline{z})=z_{(12)}+ z_{(23)} +z_{(35)}\\
\mbox{\text{subject to}} &
\left( \begin{array}{ccc}
0 &0 &1\\
1 &0 &1\\
0 &1 &0
 \end{array}\right)\left[ \begin{array}{l} z_{(12)}\\ z_{(23)}\\ z_{(35)} \end{array} \right] \geq \left[ \begin{array}{l} 2\\ 2\\ 2 \end{array}\right]\end{array}.\end{equation}

Solving the linear optimization problem (e.g., by a simplex method
\cite{Optbook}) gives the optimal subgraph
$(z_{(12)},z_{(23)},z_{(35)})=(0,2,2)$ with a cost of 4 units. A linear
network coding by selecting coefficients (in this example) from $GF(5)$  with SNC in Fig. \ref{Tandem-NC} can meet the minimum-cost
subgraph. The coding scheme is  $\underline{p}_1=0$, $\underline{p}_2=2\underline{a}_2+\underline{b}_2$, $\underline{p}_3=\underline{a}_2+2\underline{b}_2$,
$\underline{p}_4=\underline{p}_2+(\underline{a}_1+\underline{b}_1+\underline{a}_2+\underline{b}_2)=\underline{a}_1+\underline{b}_1+3\underline{a}_2+2\underline{b}_2$ and
$\underline{p}_5=\underline{p}_3+(\underline{a}_1+2\underline{b}_1+\underline{a}_2+2\underline{b}_2)=\underline{a}_1+2\underline{b}_1+2\underline{a}_2+4\underline{b}_2$. Here $\underline{p}_4, \underline{p}_5$
are fragments for the new node, which satisfies RCP.

\section{ Optimal-Cost Code  for the Minimum Storage Network ($\alpha=M/k$) }\label{sec:code-construction}
In this section, we  show that the lower bound of the repair-cost is achievable for $\alpha=\frac{M}{k}$.  That is, there exists a linear network code corresponding to the repair with the minimum-cost subgraph from the optimization problem (\ref{opt-lin}). We call the codes that achieve this optimal point  as the optimal-cost minimum-storage regenerating (OCMSR) code. Our proof is based on random linear codes and then we discuss the required finite field size for constructing the OCMSR code. Similar to the method in \cite{Wu01}, we  consider  the first stage of repair and then by induction on the number of repair stages, we  generalize the results to multiple stages of repairs. To find the sufficient field size for successful regeneration, we apply sparse-zero lemma as follows.

\begin{lem}\label{schwartz}  Consider a
multi-variable polynomial $g(\alpha_1,\alpha_2,...,\alpha_n)$ which is not identically  zero, and has the maximum degree
  in each variable at most $d_0$. Then, there exist variables  $\gamma_1,\gamma_2,...,\gamma_n$ in the finite field $GF(q)$, and $q \geq d_0$, such that $g(\gamma_1,\gamma_2,...,\gamma_n)\neq 0$
 \end{lem}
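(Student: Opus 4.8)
The plan is to prove the statement by induction on the number of variables $n$, reducing the multivariate claim to the elementary one-variable fact that a nonzero polynomial of degree at most $d_0$ has at most $d_0$ roots in any field. This single fact is the engine that drives both the base case and the inductive step.

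First I would dispose of the base case $n=1$. A nonzero univariate polynomial $g(\alpha_1)$ of degree at most $d_0$ has at most $d_0$ distinct roots, so as soon as the field $GF(q)$ contains strictly more than $d_0$ elements its root set cannot exhaust $GF(q)$, and some $\gamma_1\in GF(q)$ satisfies $g(\gamma_1)\neq 0$. This already isolates the quantitative requirement on $q$: one needs more field elements than the per-variable degree bound.

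For the inductive step I would assume the claim for $n-1$ variables and let $g(\alpha_1,\ldots,\alpha_n)$ be nonzero with degree at most $d_0$ in each variable. I would regard $g$ as a polynomial in the last variable,
$$ g(\alpha_1,\ldots,\alpha_n)=\sum_{i=0}^{d_0} h_i(\alpha_1,\ldots,\alpha_{n-1})\,\alpha_n^{\,i}, $$
where each coefficient $h_i$ is a polynomial in the remaining $n-1$ variables, still of degree at most $d_0$ in each of them. Since $g$ is not identically zero, at least one coefficient $h_{i_0}$ is not identically zero. The induction hypothesis, applied to $h_{i_0}$, then furnishes values $\gamma_1,\ldots,\gamma_{n-1}\in GF(q)$ with $h_{i_0}(\gamma_1,\ldots,\gamma_{n-1})\neq 0$. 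Substituting these leaves $g(\gamma_1,\ldots,\gamma_{n-1},\alpha_n)$ a nonzero univariate polynomial in $\alpha_n$ of degree at most $d_0$, and one final application of the base case produces $\gamma_n\in GF(q)$ with $g(\gamma_1,\ldots,\gamma_n)\neq 0$, closing the induction.

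The delicate point, and the only genuine obstacle, is ensuring that fixing $\gamma_1,\ldots,\gamma_{n-1}$ does not silently annihilate the whole polynomial before the last substitution; this is precisely why the $\gamma_j$ must be chosen so that a truly nonzero coefficient $h_{i_0}$ survives, rather than taken arbitrarily. I would also flag that every elimination consumes the ``at most $d_0$ roots'' bound, so the argument requires $q>d_0$ (equivalently $q\geq d_0+1$) at each level; the stated hypothesis $q\geq d_0$ should be read with this strict inequality in mind, since for $q=d_0$ a degree-$d_0$ polynomial such as $\prod_{a\in GF(q)}(\alpha_1-a)$ vanishes on all of $GF(q)$ and no non-root exists.
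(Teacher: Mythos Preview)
Your proof is correct and follows the standard induction on the number of variables that underlies the Schwartz--Zippel type lemma. The paper itself does not give an argument at all; it simply cites Lemma~19.17 in Yeung's textbook, whose proof is exactly the induction you wrote out: expand in the last variable, apply the inductive hypothesis to a nonzero coefficient, then use the univariate root bound to pick the final coordinate. So there is no methodological difference to discuss---you have supplied the details the paper outsourced to a reference.

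Your closing observation is also well taken: the lemma as stated in the paper requires $q\ge d_0$, but your counterexample $\prod_{a\in GF(q)}(\alpha_1-a)$ shows that the strict inequality $q>d_0$ is what the argument actually uses and needs. This is an off-by-one slip in the paper's transcription of the cited result, not a flaw in your reasoning; the downstream application in Theorem~\ref{Theorem:Fieldsize} only needs $q>d_0$ anyway.
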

 \begin{proof} See proof of Lemma 19.17 in \cite{Raymond}. \end{proof}

Suppose a source information file  consists of $M=k\alpha$ fragments. Assuming each fragment $\underline{m}_i, (i = 1, \cdots, M)$ being a vector of elements in  $\mathbb{F}_q$ 
we can denote the source by a vector $\underline{s}=[\underline{m}_1,\cdots, \underline{m}_{M}]^T$. Then vector $\underline{s}$ is encoded by an erasure code (satisfying the RCP) to $n\alpha$ fragments and
distributed among $n$ nodes such that every storage node stores $\alpha$ fragments. If $\underline{X}_{i}$ denotes the stored symbols of node $i$, then $\underline{X}_{i}=\underline{Q}_i^{T}\underline{s}$, where $\underline{Q}_i$ is an $M \times \alpha$-dimensional matrix representing the coding coefficients of node $i$. When a node
fails (without loss of generality, we assume node 1 using the encoding coefficients $\underline{Q}_1$ fails) the
optimization algorithm finds the minimum-cost subgraph. Following the minimum-cost subgraph, the new node is regenerated by surviving node cooperation. Clearly, with the minimum-cost subgraph, we also know which nodes should encode on the directed graph. Then, with a proper finite field, we can find the network code and regenerate  the new node with the coding coefficients $\underline{Q}_1^{'}$, and $\underline{X}_{i}^{'}=\underline{Q}_1^{'T}\underline{s}$.

To maintain the RCP after the regeneration of the lost node, the coding coefficients  ($\underline{Q}_1^{'}$) have to meet certain requirement. That is, for any selection of $k-1$  out of $n-1$ surviving nodes, $\Xi_{k-1}=\{\underline{Q}_{s_1},\cdots,\underline{Q}_{s_{k-1}}\}$, together with the codes of the new node $\underline{Q}_1^{'}$, the polynomial $\det([\underline{Q}_1^{'},\underline{Q}_{s_1},\cdots,\underline{Q}_{s_{k-1}}])$ is a non-zero polynomial. In what follows, we first show
that $\det([\underline{Q}_1^{'},\underline{Q}_{s_1},\cdots,\underline{Q}_{s_{k-1}}])$ satisfying
the subgraph of the optimization process is not identically zero and then discuss the required field size.

\begin{lem} \label{lem_nonzero2} For regenerating node $1$, there exist linear codes satisfying the minimum-cost subgraph (resulted from problem (\ref{opt-lin})) such that the polynomial $\det([\underline{Q}_1^{'},\underline{Q}_{s_1},\cdots,\underline{Q}_{s_{k-1}}])$ is non-zero for any selected set $\Xi_{k-1}$.
That is,
\begin{equation}
\prod_{\{s_1,\cdots,s_{k-1}\}\subset \{2,\cdots,n\}}
\det([\underline{Q}_1^{'},\underline{Q}_{s_1},\cdots,\underline{Q}_{s_{k-1}}])\neq 0.
\label{Pi}\end{equation} \end{lem}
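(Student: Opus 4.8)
The plan is to treat the local coding coefficients used along the edges of the minimum-cost subgraph as indeterminates over $\mathbb{F}_q$, and to regard each factor $\det([\underline{Q}_1^{'},\underline{Q}_{s_1},\cdots,\underline{Q}_{s_{k-1}}])$ as a polynomial in these indeterminates. Note that $\underline{Q}_1^{'}$ is itself a polynomial function of these coefficients, since the new node's content is obtained by linearly combining the fragments routed to it through the subgraph (this is the algebraic transfer-matrix viewpoint of \cite{KoMed}). Each factor is the determinant of an $M\times M$ matrix, since stacking the $M\times\alpha$ block $\underline{Q}_1^{'}$ together with $k-1$ blocks $\underline{Q}_{s_i}$ yields $k\alpha=M$ columns. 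I would first show that each factor is not identically zero, then conclude that the whole product in (\ref{Pi}) is not identically zero, and finally invoke Lemma~\ref{schwartz} to produce an assignment over a sufficiently large field.

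For the central step, fix a subset $\{s_1,\dots,s_{k-1}\}\subset\{2,\dots,n\}$ and consider the data collector that connects to the new node together with these $k-1$ surviving nodes. Because the subgraph $\underline{z}$ lies in the feasible polytope $\Psi$ of (\ref{eqn:constraint}), the cut constraint associated with this $DC$ holds, so the min-cut from the source $S$ to this $DC$ in the modified information flow graph is at least $M$. By the algebraic network-coding framework of \cite{KoMed} (equivalently, the min-cut max-flow theorem of \cite{InfoFlow}), a min-cut of at least $M$ guarantees that the associated transfer determinant $\det([\underline{Q}_1^{'},\underline{Q}_{s_1},\dots,\underline{Q}_{s_{k-1}}])$, viewed as a polynomial in the coding coefficients, is not identically zero; concretely, there is an assignment over a large enough field for which the $M\times M$ matrix has full rank and the $DC$ can reconstruct the file.

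To finish, I would use the fact that the polynomial ring over a field is an integral domain, so a product of finitely many polynomials that are each not identically zero is itself not identically zero. Thus the left-hand side of (\ref{Pi}) is a nonzero polynomial in the coding coefficients. Letting $d_0$ denote an upper bound on its degree in each indeterminate, Lemma~\ref{schwartz} guarantees that for any field $GF(q)$ with $q\ge d_0$ there exist values of the indeterminates making the product nonzero; such an assignment simultaneously makes every individual determinant nonzero, so the regenerated coefficients $\underline{Q}_1^{'}$ preserve the RCP for \emph{every} choice of $k-1$ surviving nodes. The explicit estimate of $d_0$, and hence the concrete field size, is deferred to the subsequent discussion.

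I expect the main obstacle to be the central step: arguing that each factor is not identically zero. The delicate point is that the factors are not independent polynomials, since the same block $\underline{Q}_1^{'}$, itself a polynomial in the shared edge coefficients, appears in all of them, so one cannot treat the determinants as algebraically independent. Fortunately the integral-domain argument needs only that each factor is individually nonzero as a polynomial, which follows cleanly by applying the single-collector max-flow min-cut achievability once per data collector; the shared variables cause no difficulty for the product being nonzero.
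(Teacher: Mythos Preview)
Your proposal is correct and follows the same high-level outline as the paper: show that each determinant factor is not identically zero as a polynomial in the repair coefficients, then conclude the product is nonzero (the paper leaves the integral-domain step implicit and defers the Schwartz--Zippel application to Theorem~\ref{Theorem:Fieldsize}, as you do). The route for the central step differs. You invoke the Koetter--M\'edard transfer-matrix theorem as a black box: min-cut $\ge M$ to the $DC$ implies the transfer determinant is a nonzero polynomial. The paper instead gives a direct, MSR-specific construction: since the fixed blocks $[\underline{Q}_{s_1},\dots,\underline{Q}_{s_{k-1}}]$ already have rank $(k-1)\alpha$ by RCP, and since feasibility of $\underline z$ forces at least $\alpha=M-(k-1)\alpha$ edges to cross from $\overline{\mathcal V}$ into $\mathcal V$, one can route $\alpha$ vectors from some $\underline{Q}_{s_k}$ (independent of $\Xi_{k-1}$) along those $\alpha$ edges to exhibit a concrete assignment with full rank. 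Your abstract invocation is cleaner and does not rely on $\alpha=M/k$, but be aware of one subtlety you glide over: in your setup only the repair-edge coefficients are indeterminates while the surviving-node matrices $\underline{Q}_{s_i}$ are \emph{fixed}, so the Koetter--M\'edard statement you cite does not apply verbatim (it treats all edge maps as variables). The paper's explicit construction is precisely what fills this gap; you could equivalently note that the RCP on the fixed $\underline{Q}_i$'s guarantees that the $(k-1)\alpha$ ``fixed'' cut edges already carry independent vectors, so the edge-disjoint-path argument behind \cite{InfoFlow,KoMed} still goes through on the remaining variable edges.
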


\begin{proof}  Consider $\Xi_k=\{\underline{Q}_{s_1},\cdots,\underline{Q}_{s_{k}}\}$ a set of coding coefficients selected from $k$ out of $n$ nodes. Since every $k$ nodes can reconstruct the original file, then the matrix $[\underline{Q}_{s_1},\cdots,\underline{Q}_{s_{k}}]$ has full rank $M=k\alpha$. Thus, for $\Xi_{k-1}$, the matrix  $[\underline{Q}_{s_1},\cdots,\underline{Q}_{s_{k-1}}]$ has rank $(k-1)\alpha$.  To regenerate a new node with the RCP, the minimum-cost subgraph of the information flow graph  meets the requirement that by connecting the $DC$ to the new node and any $k-1$ other nodes, all the cuts are greater
than (or equivalent to) $M$.  This requires  the matrix of coding coefficients of the cut in the modified flow graph containing the new node and selection node set having full rank $M$. To prove this, consider a set $\mathcal{V}$ containing the data collector,  $in$ and $out$ nodes of the new node, and $out$ nodes of the nodes in set $\Xi_{k-1}$. Other nodes including the source node are in the complement set $\overline{\mathcal{V}}$.   The cut  passes the $in$-$out$ edges of nodes in $\Xi_{k-1}$ has the rank $(k-1)\alpha$. Since  all the cuts has at least $M$ edges, there would be at least $R=M-(k-1)\alpha=\alpha$ edges from  $\overline{\mathcal{V}}$ entering $\mathcal{V}$. In $\overline{\mathcal{V}}$, there exist $\alpha$  vectors e.g., vectors in  $\underline{Q}_{s_{k}}$ which are independent of vectors in $\Xi_{k-1}$.  Thus, if we send fragments corresponding to those $\alpha$ independent vectors through $R=\alpha$ edges to the $\mathcal{V}$, the matrix of the coding coefficients of the cut will be full rank. Therefore, $\det([\underline{Q}_1^{'},\underline{Q}_{s_1},\cdots,\underline{Q}_{s_{k-1}}])$ can be non-zero.  \end{proof}



To find the required field size of OCMSR codes, we need to know the maximum degree of the variables of the polynomial in ($\ref{Pi}$). When SNC is allowed in intermediate nodes, as in our repairing scheme, the polynomial degree can be greater than non-SNC schemes since SNC may involve extra network coding processes, compared to the repairing schemes only encoding in the storage nodes and the new node.  For analysis, we use $n_{nc}$ to denote the maximum number of network coding processes involved by one fragment in one repair stage. We note that $n_{nc} \leq |N|$, the number of nodes in the networks.

\textit{Example 1:}  In the repair process of node four in tandem
network (Fig. \ref{modifiedgraph}), $n_{nc} =4$ . The  network coding is at node $1$, $2$, $3$ and at the new node.

\begin{thm} For a distributed storage system
$G(n,k,\alpha)$ with the source file of size $M$, if the
field size is greater than $d_0$, there exists a linear network code
such that the RCP is satisfied for any stage of repair, regardless
of how many failures/repairs happened before, where
\begin{equation}
d_0=\binom{n}{k}Mn_{nc}.
\end{equation}\label{Theorem:Fieldsize}
\end{thm}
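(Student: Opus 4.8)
The plan is to construct a single product polynomial in the free local coding coefficients whose non-vanishing is equivalent to preserving the RCP across a repair, to show it is not identically zero via Lemma~\ref{lem_nonzero2}, to bound its degree in each variable, and then to invoke the sparse-zero Lemma~\ref{schwartz} to guarantee a feasible assignment over a large enough field; an induction on the repair stage then extends this to all stages.

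First I would fix the repair of a single node (say node~$1$) and collect the constraints that the regenerated coefficients $\underline{Q}_1^{'}$ must satisfy. By the discussion preceding Lemma~\ref{lem_nonzero2}, the RCP after this repair is equivalent to requiring $\det([\underline{Q}_1^{'},\underline{Q}_{s_1},\cdots,\underline{Q}_{s_{k-1}}])\neq 0$ for every admissible choice of $k-1$ surviving nodes. I would therefore form the product polynomial $P$ equal to the left-hand side of~(\ref{Pi}), viewed as a polynomial in the free local coding variables assigned along the minimum-cost subgraph of~(\ref{opt-lin}). Lemma~\ref{lem_nonzero2} already certifies that $P$ is not identically zero, since its rank argument shows each factor can be made non-zero simultaneously.

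Next I would bound the maximum degree of $P$ in each variable. Each factor is the determinant of an $M\times M$ matrix, hence a sum of products of $M$ entries; each entry of the block $\underline{Q}_1^{'}$ is produced by the SNC process through at most $n_{nc}$ successive network-coding operations, so as a polynomial in the free coefficients it has degree at most $n_{nc}$ in any single variable. Consequently each determinant factor has degree at most $M n_{nc}$ in each variable. Since there are at most $\binom{n}{k}$ data collectors, and hence at most $\binom{n}{k}$ factors in the product, the degree of $P$ in each variable is at most $d_0=\binom{n}{k}M n_{nc}$. Applying Lemma~\ref{schwartz} to the non-zero polynomial $P$ with $q>d_0$ then yields field elements for which $P\neq 0$, i.e.\ all the determinants are simultaneously non-zero, so the RCP is restored in this repair stage.

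Finally I would lift the single-stage result to arbitrarily many stages by induction, as in~\cite{Wu01}. Assuming the code satisfies the RCP after $t$ stages---so that every current set $\Xi_{k-1}$ has rank $(k-1)\alpha$, which is precisely the hypothesis used in the proof of Lemma~\ref{lem_nonzero2}---the $(t+1)$-st repair again reduces to a fresh non-vanishing problem: with all previously chosen coefficients frozen as constants in $GF(q)$, the new product polynomial in the new local coding variables is, by the same argument, not identically zero and of degree at most $d_0$ in each variable. The crucial point is that $d_0$ does not depend on the stage index, because only one stage of network coding is free at a time; therefore the same field $GF(q)$ with $q>d_0$ suffices at every stage. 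I expect the induction to be the main obstacle: one must check that freezing the previously chosen coefficients still leaves the stage-$(t+1)$ polynomial not identically zero, which requires that the frozen configuration continues to meet the rank and cut conditions invoked by Lemma~\ref{lem_nonzero2}, rather than accidentally collapsing some determinant factor to the zero polynomial.
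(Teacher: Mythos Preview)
Your proposal is correct and follows essentially the same approach as the paper: form the product of determinants in~(\ref{Pi}), invoke Lemma~\ref{lem_nonzero2} for non-triviality, bound the per-variable degree, apply Lemma~\ref{schwartz}, and induct on the repair stage. Two minor differences worth noting: the paper counts only $\binom{n-1}{k-1}$ factors in the repair polynomial (since node~$1$ is fixed as the failed node) and then observes $\binom{n-1}{k-1}\le\binom{n}{k}$, whereas you use $\binom{n}{k}$ directly as an upper bound; and the paper treats the base case of the induction explicitly by bounding the degree of the initial code-construction polynomial by $\binom{n}{k}M$ and noting $n_{nc}\ge 2$ forces $d_0\ge\binom{n}{k}M$, a step you leave implicit.
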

\begin{proof} The proof is by induction on the number of repair stages. That is,  we assume before a node fails all the storage nodes have the RCP. In each stage of repair, when a node fails, the new node is regenerated preserving the RCP. Thus, we initialize the code on $n$ nodes by which any $k$ out of
$n$ nodes can reconstruct the original file. Then if
a node fails, the new node is regenerated such that  the repairing cost is minimized and the RCP is preserved.
By the RCP, the coding coefficients of any $k$ nodes must have full rank $M$. That is,
\begin{equation}
\prod_{\{s_1,\cdots,s_k\}\subset \{1,\cdots,n\}} \det ( [
\underline{Q}_{s_1}, \cdots, \underline{Q}_{s_k} ] ) \neq 0.
\label{Eqn:Repair-Matrix}
\end{equation}

Clearly, the maximum degree of variables in (\ref{Eqn:Repair-Matrix}) is $\binom{n}{k}M$. Thus, by Lemma 1, if the field size ($q$) is greater than $\binom{n}{k}M$, then there
is a network coding solution for repair.  Since $n_{nc}\geq 2$ (at least two coding process: one in surviving nodes, and another in the new node),
$d_0 \geq \binom{n}{k}M$; thus, there is a coding solution for $q \geq d_0$.

When a node fails (assume $Q_1$), the optimization algorithm finds
the minimum-cost subgraph. Accordingly, the fragments are combined using linear network coding, and then
 the new node is regenerated. The set including the new node ($Q_1^{'}$) and surviving
nodes must satisfy the RCP. Thus,
\begin{equation}
\prod_{\{s_1,\cdots,s_{k-1}\}\subset \{2,\cdots,n\}}
\det([\underline{Q}_1^{'},\underline{Q}_{s_1},\cdots,\underline{Q}_{s_{k-1}}])\neq 0.
\end{equation}\label{eqTH}
By Lemma 2, the polynomial can be nonzero. The maximum degree of each variable is less than $\binom{n-1}{k-1}M n_{nc}$. By Lemma 1, if the finite field size $q
\geq \binom{n-1}{k-1}M n_{nc}$, there is a network solution for the repair.
Clearly, $d_0=\binom{n}{k}Mn_{nc} \geq \binom{n-1}{k-1}M n_{nc}$ for $n\geq k$. Hence, for $q>d_0$, there exist a code for the repair. This concludes our proof.\end{proof}

In summary, optimal-cost repair for the minimum storage regenerating  (OCMSR) code is given in two steps. First, the optimal-cost subgraph is found decoupled from coding by solving problem (\ref{opt-lin}). Then, we can construct the code of the new node by  random linear coding \cite{Ho01} or deterministic \cite{Jagi}  from the finite field size determined by Theorem \ref{Theorem:Fieldsize}.

\section{  Repair-Cost Lower Bound, and the gain of optimal-cost repair in  networks with given topologies}\label{sec:lowerbound}
In this section, we study the lower bound of the repair-cost
for distributed storage systems on  networks with given topologies. We aim at showing how considering networks and cost together result in a lower cost in the repair. We first apply our method to tandem, star and grid networks, where there might not exist direct links between new nodes to all the surviving nodes. Next, for the scenario that the new node has direct links to surviving nodes (as \cite{Dimk01}, \cite{Wu01}), we show that surviving node cooperation can reduce the cost. For the purpose, we study the repair cost in a fully connected network.  It may worth to note that our approach can be applied on any network, and for any cost on the links of the networks. The networks considered in this section are just examples to present the gain of optimal-cost repair. For simplicity, we assume in this section that links have unit cost of transmission, unless otherwise stated. We define the gain in our approach as the ratio of repair-cost in minimum-bandwidth approach (denoted as $\sigma_{non-opt}$) to the optimal-cost repair (denoted as $\sigma_{opt}$),
\begin{equation}
g_c=\frac{\sigma_{non-opt}}{\sigma_{opt}}.
\label{tandem_eq}
\end{equation}
\subsection{Tandem Networks}\label{sec:ResTandem}
\subsubsection{Repair-Cost Lower Bound in Tandem Networks}\label{sec:ResTandem}
In a tandem topology, nodes are in a line.  That is, each
node is linked to two neighboring nodes, except the nodes in the line ends, which have only one neighbor. When a node fails and a new
node joins, the repair traffic is relayed by intermediate nodes
to the new node. Then we can formulate the repair-cost lower bound by the following proposition.

\begin{pro} Consider a tandem distributed storage network consisting of
$n$ nodes where each node stores $\alpha$ fragments and every $k$
nodes can reconstruct the original file of size $M$. Assuming the cost of
a link (between adjacent nodes) equals to one unit, the
lowest repair-cost is achieved by cooperation of the $k$ nearest
surviving nodes and equals to
\begin{equation}
\sigma_c \geq  [k(M-(k-1)\alpha)]^{+},
\label{tandem_eq}
\end{equation}
where $[x]^{+}=\max\{x,0\}$
\label{pro:TandemCost}\end{pro}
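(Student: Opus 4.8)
The plan is to reduce the claim to the linear program (\ref{opt-lin}) and to exhibit matching lower and upper bounds on its optimum. Write $\Delta:=M-(k-1)\alpha$ for the residual information the new node must supply, so the target cost is $[k\Delta]^{+}$. In the tandem I model the flow graph as in Fig.~\ref{modifiedgraph}: each surviving node has an $\alpha$-capacity storage edge (in $\to$ out) and relays toward the new node through out$\to$out links. Labelling by distance to the new node, let $e_d$ be the link leaving the node at distance $d$, with $e_1$ entering the new node. Every cut joining the $DC$ to the new node together with a set $S$ of $k-1$ surviving nodes then has value $(k-1)\alpha+\sum_{e\text{ crossing}}e$, so feasibility forces $\sum_{e\text{ crossing}}e\ge \Delta$. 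If $\Delta\le 0$ the constraints are vacuous (any $\underline{z}\ge 0$, in particular $\underline{z}=0$, is feasible), giving cost $0$; this is exactly the $[\cdot]^{+}$, so I assume $\Delta>0$ below.

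For the lower bound I would produce $k$ cuts whose crossing-edge sets are pairwise disjoint; summing their inequalities bounds a subsum of the edge flows from below by $k\Delta$, and since all $e\ge0$ this subsum is at most $\sigma_c$, yielding $\sigma_c\ge k\Delta$. When the failure is at the end of the line this is transparent: taking $S$ to be the $k-1$ nearest nodes makes only $e_k$ cross (so $e_k\ge\Delta$), and more generally taking $S$ to be the $j-1$ nearest together with the $k-j$ farthest nodes makes only $e_j$ cross (so $e_j\ge\Delta$) for each $j=1,\dots,k$. These $k$ cuts isolate the distinct edges $e_1,\dots,e_k$, whence $\sigma_c\ge\sum_{d=1}^{k}e_d\ge k\Delta$. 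For an interior failure the same bound holds by pairing symmetric selections on the two sides so that the crossing sets stay disjoint; equivalently, assigning unit dual multipliers to these $k$ constraints gives a feasible solution of the dual of (\ref{opt-lin}) of value $k\Delta$. Establishing the existence of such $k$ pairwise-disjoint cuts for an arbitrary failure position is the step I expect to be the main obstacle, since the crossing pattern of a cut depends intricately on $S$.

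For achievability I would exhibit a feasible subgraph of cost exactly $k\Delta$ using only the $k$ nearest surviving nodes: route $\Delta$ units of fresh information from each along its shortest path to the new node, which for an end failure is simply $e_d=\Delta$ for $d\le k$ and $e_d=0$ otherwise, of cost $k\Delta$. I then verify it meets every cut: for any $S$ with $|S|=k-1$, at least one of the $k$ nearest nodes is unselected, and the nearest unselected one, at distance $d^{\star}\le k$, forces the edge $e_{d^{\star}}$ (which carries $\Delta$) to cross, so $\sum_{e\text{ crossing}}e\ge\Delta$ and the cut value is $\ge M$. Hence this subgraph is optimal for (\ref{opt-lin}) with value $k\Delta$, matching the lower bound. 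Finally, being a minimum-cost subgraph, it falls under Lemma~\ref{lem_nonzero2}, which guarantees a linear network code realizing it while preserving the RCP; thus $\sigma_c=[k(M-(k-1)\alpha)]^{+}$ is attained by cooperation of the $k$ nearest nodes, as claimed.
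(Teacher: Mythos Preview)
Your lower-bound argument is essentially identical to the paper's: for an end failure, both choose for each $j=1,\dots,k$ the set $S$ consisting of the $j-1$ nearest and the $k-j$ farthest surviving nodes, so that the resulting cut isolates the single edge $e_j$ and forces $e_j\ge\Delta$; summing yields $\sigma_c\ge k\Delta$. The paper's appendix proves only this lower bound (assuming without further comment that the failed node sits at an end of the line), leaving achievability to the general code-existence machinery of Section~\ref{sec:code-construction} and the explicit tandem construction; your explicit feasible subgraph with the accompanying cut-verification, and your handling of the $\Delta\le 0$ case, are clean additions but not a different route.
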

\proof{ See appendix A.}

By Proposition \ref{pro:TandemCost}, we can calculate the gain of optimized cost  to the non-optimized cost approach. Gain in the repair for one of the end nodes in line can be calculated as follows.
\begin{col} Consider a distributed storage system with parameters $(M=k(n-k),\alpha=(n-k), d=n-1)$  in a tandem network with $n$ nodes in order (node $1$-node $2$-$\cdots$-node $n$). If  node $1$  fails, the surviving node cooperation gives the gain
\begin{equation} g_c=\frac{n(n+1)}{2k(n-k)}.\end{equation}
\label{pro:TandemSNCgain}\end{col}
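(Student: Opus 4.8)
The plan is to obtain the gain as the quotient of two separately computed quantities, the optimized cost $\sigma_{opt}$ and the non-optimized (minimum-bandwidth) cost $\sigma_{non-opt}$, and then simplify. First I would pin down $\sigma_{opt}$ directly from Proposition~\ref{pro:TandemCost}. Since node $1$ is an end node of the line, the ``$k$ nearest surviving nodes'' are nodes $2,\dots,k+1$, and substituting the stated parameters $M=k(n-k)$ and $\alpha=n-k$ into the bound gives
\begin{equation}
\sigma_{opt}=\big[k(M-(k-1)\alpha)\big]^{+}=k\big(k(n-k)-(k-1)(n-k)\big)=k(n-k),
\end{equation}
which is strictly positive for $1\le k\le n-1$, so the $[\,\cdot\,]^{+}$ clipping is inactive. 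This step is immediate once Proposition~\ref{pro:TandemCost} is granted.

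The substantive part is computing $\sigma_{non-opt}$, the cost of the bandwidth-optimal repair that ignores topology. Here I would first determine how much each helper transmits. Because $\alpha=M/k$, the repair sits at the MSR point with $d=n-1$ helpers, so by the storage--bandwidth tradeoff \cite{Dimk01} each surviving node forwards
\begin{equation}
\beta=\frac{M}{k(d-k+1)}=\frac{k(n-k)}{k(n-k)}=1
\end{equation}
fragment toward the new node. With node $1$ at the end of the line, the fragment produced by node $j$ must be relayed hop-by-hop along $j-1$ unit-cost edges to reach the new node, costing $(j-1)\beta$. Summing over all $n-1$ helpers gives
\begin{equation}
\sigma_{non-opt}=\sum_{j=2}^{n}(j-1)\beta=\sum_{i=1}^{n-1}i=\frac{n(n-1)}{2}.
\end{equation}
Equivalently, counting edge by edge, edge $(j,j-1)$ relays the $n-j+1$ fragments originating at nodes $j,\dots,n$, and $\sum_{j=2}^{n}(n-j+1)$ yields the same total. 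Combining the two pieces, $g_c=\sigma_{non-opt}/\sigma_{opt}=\frac{n(n-1)}{2k(n-k)}$.

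Before committing, I would cross-check against the motivating example ($n=4$, $k=2$), which belongs to exactly this parameter family: the formula yields $\frac{4\cdot3}{8}=\tfrac32$, matching the ratio $\sigma_{non-opt}/\sigma_{opt}=6/4$ computed in Section~\ref{sec:LpExamp1}. I expect the main obstacle, and the place where the stated constant must be scrutinized, to be the index bookkeeping in the sum for $\sigma_{non-opt}$: the $n-1$ helper distances from an end node run over $1,2,\dots,n-1$, so summing that range is essential. Summing instead over $1,\dots,n$ would give $\frac{n(n+1)}{2}$ and hence $g_c=\frac{n(n+1)}{2k(n-k)}$; since both the motivating example and the edge-by-edge recount point to $\frac{n(n-1)}{2}$, I would carefully fix the placement of the new node relative to the line and the summation range as the decisive step.
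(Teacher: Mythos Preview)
Your approach is exactly the paper's: compute $\sigma_{opt}$ from Proposition~\ref{pro:TandemCost} and $\sigma_{non-opt}$ by summing the per-helper hop costs with $\beta=1$, then take the ratio. The paper's proof reads ``Sending the repair bandwidth from node $n$ to the new node costs $n$ units, from node $n-1$ \ldots costs $n-1$ units and so on,'' and then sums $n+(n-1)+\cdots+1=\frac{n(n+1)}{2}$.

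You have, however, caught a genuine off-by-one error in the paper. With node~$1$ failed and the new node taking its place at the end of the line, the $n-1$ surviving nodes $2,\dots,n$ sit at hop distances $1,\dots,n-1$, so $\sigma_{non-opt}=\sum_{i=1}^{n-1}i=\frac{n(n-1)}{2}$, not $\frac{n(n+1)}{2}$. The paper's own motivating example confirms your count: for $n=4$, $k=2$ it computes $\sigma_{non-opt}=6$ and $\sigma_{opt}=4$ explicitly (Sections~\ref{Sec:sysMod} and~\ref{sec:LpExamp1}), giving $g_c=3/2$, which matches your $\frac{n(n-1)}{2k(n-k)}=\frac{12}{8}$ and not the stated $\frac{n(n+1)}{2k(n-k)}=\frac{20}{8}$. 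The paper's sum has $n$ terms even though there are only $n-1$ helpers, and it places node~$n$ at distance $n$ rather than $n-1$. Your edge-by-edge recount and the sanity check against the worked example are both sound; the stated corollary should read $g_c=\frac{n(n-1)}{2k(n-k)}$. (The asymptotic remark after the corollary, that $g_c\to 2$ when $k=n/2$, is unaffected.)
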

\proof{ With the given parameters the minimum bandwidth repair is $\beta=1$ \cite{Dimk01}. Sending the repair bandwidth from node $n$ to the new node costs $n$ units, from node $n-1$ to the new node costs $n-1$ units and so on. Thus in total the repair cost in this approach is $\sigma_{non-opt}=n+(n-1)+\cdots+1=\frac{n(n+1)}{2}$. From Proposition \ref{pro:TandemCost}, $\sigma_c=k(n-k)$. Hence the gain would be $g_c=\frac{n(n+1)}{2k(n-k)}$.}

Therefore, for $(n,k)-$MDS codes with $k=n/2$ the $g_c$ tends to $2$ for a large network ($n\rightarrow \infty$).

\subsubsection{Explicit Construction of OCMSR code for Tandem Networks}
Previously, we proved that the optimal cost  functional repair can be achieved by applying linear network coding in a large finite field. In this section, we show that the lower bound can be achieved even with an extra condition of exact repair. We give an explicit construction for the optimal-cost exact repair in tandem networks. The motivation is that it requires  smaller finite field size comparing to random code and also it has explicit construction. We split the source file of size $M$ into $k$ fragments. We denote the source file by vector $\underline{m}=[m_1 m_2 \cdots m_k]^T$. To construct $(n,k)-MDS$ code, We use a $k \times n$ Vandermonde matrix  $G$ as a generator matrix.
\begin{equation}
G=  \left(\begin{array}{cccc}
1 & 1 & \cdots & 1\\
\alpha_1 & \alpha_2 &   \cdots & \alpha_n \\
\alpha_1^2 & \alpha_2^2 & \cdots & \alpha_n^2 \\
\vdots & \vdots & \ddots & \vdots\\
\alpha_1^{k-1} & \alpha_2^{k-1} &  \cdots & \alpha_n^{k-1}
\end{array}\right),\end{equation}
where  $\alpha_i$s for $i \in \{ 1,\cdots, n\}$ are distinct elements from the finite field $GF(q)$.

By the property of Vandermonde matrix, every $k \times k$ submatrix of $G$ is full rank if $\alpha_i$s, for $ i \in \{ 1,\cdots, n\}$, are distinct elements. This requires $q \geq n$. Each column i.e, column $t, t \in {1,\cdots,n}$ in matrix $G$ represents the code on node $t$. We denote the coded data on node $t$ as $v(t)$, then,
 \begin{equation}
 v_{t}=m_1+m_2 \alpha_{t}+\cdots+m_k \alpha_{t}^{k-1}= [1\text{ } \alpha_t \text{ } \cdots \alpha_t^{k-1}]\underline{m}.
\end{equation}

 By the property of Vandermonde matrix, every $k \times k$ submatrix of $G$ is full rank then a data collector can reconstruct the source file by connecting to any $k$ nodes.  The regenerating process is only by linear combination. Assume nodes are labeled in order i.e., node $1$ connects to node $2$, node $2$ connects to node $1$ and $3$, and so on. By Proposition  \ref{pro:TandemCost}, for $M=k$ fragments, the optimum-cost repair is by transmitting $M/k=k/k=1$ fragment to the neighbor. Assume node $t$, for $t \in \{1,\cdots,n\}$, fails and a sequence of backward nodes $\{ \texttt{node}_{t-k_{1}},\texttt{node}_{t-k_{1}+1},\cdots, \texttt{node}_{t-1} \}$ and a sequence of forward nodes $\{ \texttt{node}_{t+1},\texttt{node}_{t+2},\cdots, \texttt{node}_{t+k_{2}} \}$, such that $k_{1}+k_{2}=k$, help the new node to regenerate the content of the failed node.

The repair process is as follows. The new node receives fragments from two directions, from node $t-1$, aggregating data from nodes $\texttt{node}_{t-k_{1}}$ till \texttt{node} $t-1$,  and from \texttt{node} $t+1$, aggregating data from nodes $\texttt{node}_{t+k_{2}}$ to \texttt{node} $t+1$. Thus, in one direction node $\texttt{node}_{t-k_{1}}$ multiplies its content by a coefficient $\xi_{t-k_{1}}$ from $GF(q)$ and sends the result to node $\texttt{node}_{t-k_{1}+1}$. Then node $\texttt{node}_{t-k_{1}+1}$, multiplies its content by $\xi_{t-k_{1}+1}$ and  combines the result to the received fragment and then sends its combined fragment to its next neighbor $\texttt{node}_{t-k_{1}+2}$.  Finally  \texttt{node} $t-1$  transmits the combined fragment $w_{t-1}$, which is,
\begin{equation}\begin{array}{ll}
w_{t-1}=\xi_{t-k_{1}} v_{t-k_{1}}+\xi_{t-k_{1}+1} v_{t-k_{1}+1}+ \cdots + \xi_{t-1} v_{t-1}\\
=\xi_{t-k_{1}}(m_1+m_2 \alpha_{t-k_{1}}+\cdots+m_k \alpha_{t-k_{1}}^{k-1})\\
+\xi_{t-k_{1}+1}(m_1+m_2 \alpha_{t-k_{1}+1}+\cdots+m_k \alpha_{t-k_{1}+1}^{k-1})\\
+\cdots+\xi_{t-1}(m_1+m_2 \alpha_{t-1}+\cdots+m_k \alpha_{t-1}^{k-1}).
\end{array}\end{equation}
In another route, node \texttt{node} $t+1$ sends the aggregated fragment to the new node. That is,  $\texttt{node}_{t+k_{2}}$ multiplies its content by a coefficient $\xi_{t+k_{2}}$ from $GF(q)$ and sends the result to  $\texttt{node}_{t+k_{2}-1}$. Then node $\texttt{node}_{t+k_{2}-1}$, multiplies its content by $\xi_{t+k_{1}-1}$ and  combines the result to the received fragment and then sends its combined fragment to its next neighbor $\texttt{node}_{t+k_{2}-2}$.  Other helping nodes do the same. Finally  \texttt{node} $t+1$  transmits the combined fragment $w_{t+1}$, which is,
\begin{equation}\begin{array}{ll}
w_{t+1}=\xi_{t+k_{2}} v_{t+k_{2}}+\xi_{t+k_{2}-1} v_{t+k_{2}-2}+ \cdots + \xi_{t+1} v_{t+1}\\
=\xi_{t+k_{2}}(m_1+m_2 \alpha_{t+k_{2}}+\cdots+m_k \alpha_{t+k_{2}}^{k-1})\\
+\xi_{t+k_{2}-1}(m_1+m_2 \alpha_{t+k_{2}-1}+\cdots+m_k \alpha_{t+k_{2}-1}^{k-1})\\
+\cdots+\xi_{t+1}(m_1+m_2 \alpha_{t+1}+\cdots+m_k \alpha_{t+1}^{k-1}).
\end{array}\end{equation}
In order to achieve exact repair, we put $w_{t-1}+w_{t+1}=v_t$
\begin{equation}
 v_{t}=m_1+m_2 \alpha_{t}+\cdots+m_k \alpha_{t}^{k-1}.
\end{equation}
Thus, vector $\underline{\xi}=[\xi_{t-k_{1}},  \cdots, \xi_{t-1}, \xi_{t+1},  \cdots, \xi_{t+k_{2}}]^{T}$ should be selected such that,\\

$[\xi_{t-k_{1}} \xi_{t-k_{1}+1} \cdots, \xi_{t+k_{2}}] \times $
\begin{equation}
  \underbrace{\left(\begin{array}{ccccc}
1 & \alpha_{t-k_{1}} & \alpha_{t-k_{1}}^2 & \cdots & \alpha_{t-k_{1}}^{k-1}\\
1 & \alpha_{t-k_{1}+1} & \alpha_{t-k_{1}+1}^2 & \cdots & \alpha_{t-k_{1}+1}^{k-1} \\
\vdots & \vdots & \vdots & \ddots & \vdots \\
1 & \alpha_{t+k_{2}} & \alpha_{t+k_{2}}^2 & \cdots & \alpha_{t+k_{2}^{k-1}}\\
\end{array} \right)}_{\underline{A}}=\\
 \left(\begin{array}{c}
 1\\
 \alpha_{t}\\
  \alpha_{t}^2\\
   \vdots\\
     \alpha_{t}^{k-1}\end{array} \right)^{T}.\end{equation}
Since  matrix $\underline{A}$ is non-singular, we can always find linear codes ($\underline{\xi}$) that make the exact repair possible. Hence, for the successful reconstruction and repair process the finite field  $q > n$ suffices.

\subsection{Star Networks}\label{sec:RstStar}
\begin{figure}
 \centering
 \psfrag{a}[][][1.5]{ $\alpha$ }
 \psfrag{m1}[][][2]{ node 1 }
 \psfrag{m2}[][][2]{ node 2 }
 \psfrag{m3}[][][2]{ node 3 }
 \psfrag{m4}[][][2]{ node 4 }
 \psfrag{m5}[][][2]{ node 5 }
 \psfrag{m6}[][][2]{ node 6 }
 \psfrag{x1}[][][2]{ $x_1$ }
 \psfrag{x2}[][][2]{ $x_2$ }
 \psfrag{x3}[][][2]{ $x_3$ }
 \psfrag{x4}[][][2]{ $x_4$ }
 \psfrag{x5}[][][2]{ $x_5$ }
 \psfrag{x6}[][][2]{ $x_6$ }
 \psfrag{x7}[][][2]{ $x_7$ }
 \psfrag{x8}[][][2]{ $x_8$ }
 \psfrag{y1}[][][2]{ $y_1$ }
 \psfrag{y2}[][][2]{ $y_2$ }
 \psfrag{y3}[][][2]{ $y_3$ }
 \psfrag{y4}[][][2]{ $y_4$ }
 \psfrag{p1}[][][2]{ $p_1$ }
 \psfrag{p2}[][][2]{ $p_2$ }
 \psfrag{p3}[][][2]{ $p_3$ }
 \psfrag{p4}[][][2]{ $p_4$ }
 \psfrag{p5}[][][2]{ $p_5$ }
 \psfrag{p6}[][][2]{ $p_6$ }
 \psfrag{p7}[][][2]{ $p_7$ }
 \resizebox{6cm}{!}{\epsfbox{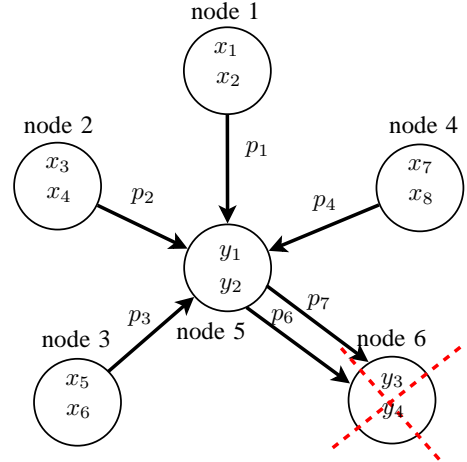}}
\caption{Optimal-cost repair in a star network with $6$ nodes.}
 \label{star}
\end{figure}
In a star network, there is a central node, and any pair of non-central nodes must
communicate through the central node.
 Fig. \ref{star} shows an example of a distributed
 storage system in a star network. In the scenario, if the central node fails, and all link costs are equivalent, the optimal-cost repair reduces to the optimal-bandwidth repair in \cite{Dimk01}. However, if non-central nodes fail, the situation is different. Assuming $n$ storage nodes and a source file size $M$, the lower bound of  repair-cost  for a non-central node is calculated as follows.

\begin{pro} In a  distributed storage system with a star topology,
the repair-cost of a non-central is greater than or equals to
\begin{equation}
\sigma_c \geq (\frac{n-2}{n-k}+1) [M-(k-1)\alpha]^{+}.
\end{equation}
\label{star-pro}
 \label{pro:starCost}\end{pro}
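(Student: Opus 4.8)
The plan is to reuse the template already established for the tandem case (Proposition~\ref{pro:TandemCost}): draw the modified information flow graph for the first stage of a non-central failure, read off the active cut constraints, and minimize the linear cost over the resulting polytope. Label the center $c$ and the $n-1$ leaves, let one leaf fail, and note that the new node is again a leaf reachable only through $c$. Since every leaf-to-leaf path in a star passes through the center, the only repair edges carrying flow are the leaf-to-center edges, with $z_i := z_{(i,c)}$ fragments sent from surviving leaf $i$, and the center-to-new edge, with $w := z_{(c,\mathrm{new})}$ fragments. As every link has unit cost and each is a single hop, the objective collapses to $\sigma_c = \sum_i z_i + w$, where the sum runs over the $n-2$ surviving leaves.

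Next I would extract two families of active cut constraints. First, connecting the $DC$ to the new node together with any $k-1$ surviving leaves (the center excluded), the cut isolates the new node's single incoming edge ($w$) and the $k-1$ storage edges ($(k-1)\alpha$), giving $w + (k-1)\alpha \ge M$, i.e. $w \ge R$ with $R := [M-(k-1)\alpha]^+$. Second, and crucially, connecting the $DC$ to the new node, the center, and any $(k-2)$-subset $T$ of the surviving leaves, the new node can contribute fresh information only through what the center relays from leaves \emph{outside} $T$; evaluating the cut in the flow graph, where the leaf-to-center relay flow bypasses the center's $\alpha$-capacity storage edge, yields $\sum_{i\notin T} z_i + (k-1)\alpha \ge M$, that is $\sum_{i\notin T} z_i \ge R$ for every $(k-2)$-subset $T$ of the $n-2$ surviving leaves.

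It then remains to minimize $\sum_i z_i + w$ subject to $w \ge R$ and the covering constraints $\sum_{i\notin T} z_i \ge R$. Taking $w = R$ and arguing on the sorted values $z_{(1)} \le \cdots \le z_{(n-2)}$, the worst $T$ selects the $k-2$ largest $z_i$, so the binding condition is that the $n-k$ smallest values sum to at least $R$. Since each of the $k-2$ largest dominates the average $R/(n-k)$ of the smallest $n-k$, I obtain $\sum_i z_i \ge R + (k-2)\tfrac{R}{n-k} = \tfrac{n-2}{n-k}R$, attained when all $z_i$ are equal. Adding $w = R$ gives $\sigma_c \ge (\tfrac{n-2}{n-k}+1)R$, which is the claim, with the $[\cdot]^+$ absorbing the degenerate case $R \le 0$ where all flows may vanish.

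The main obstacle is the correct derivation of the second (covering) family: one must model the center as a relay whose forwarded traffic is not throttled by its own $\alpha$-capacity storage edge, and then verify that, when the $DC$ selects the center together with $T$, the only way the new node can raise the cut rank to $M$ is by carrying information that traverses the leaf-to-center edges of the leaves outside $T$. Getting this accounting right, in particular that it produces one inequality per $(k-2)$-subset rather than a single aggregate inequality, is exactly what forces the covering structure and hence the factor $\tfrac{n-2}{n-k}$; the concluding linear program is then routine.
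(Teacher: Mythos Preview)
Your proposal is correct and follows essentially the same route as the paper: the same modified flow graph, the same two families of active cuts (the center-excluded cut giving $w\ge R$, and the center-included cuts giving $\sum_{i\notin T} z_i\ge R$ for every $(k{-}2)$-subset $T$), and then a linear-programming lower bound. The only difference is in how you extract the bound on $\sum_i z_i$: the paper sums all $\binom{n-2}{k-2}$ covering constraints and observes that each $z_i$ appears in exactly $\binom{n-3}{k-2}$ of them, giving $\sum_i z_i \ge \frac{\binom{n-2}{k-2}}{\binom{n-3}{k-2}}R=\frac{n-2}{n-k}R$ directly, whereas you sort and argue extremally. The averaging argument is a touch cleaner and bypasses the need to reason about order statistics, but both are standard and equivalent here.
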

\proof{ See appendix B.}

By Proposition \ref{pro:starCost}, we can calculate the gain of surviving node cooperation comparing to the non-optimized cost approach for non-central nodes, as follows.
\begin{col} Consider a distributed storage system with parameters $(M=k(n-k),\alpha=(n-k), d=n-1)$ in a star network with $n$ nodes. If a non-central node fails, the surviving node cooperation gives the gain $g_c$,
\begin{equation} g_c=\frac{2n-3}{2n-k-2}.\end{equation}
\label{pro:starSNCgain}\end{col}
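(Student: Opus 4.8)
The plan is to evaluate the numerator $\sigma_{non-opt}$ and the denominator $\sigma_{opt}$ of the gain $g_c=\sigma_{non-opt}/\sigma_{opt}$ separately under the given parameters $(M=k(n-k),\ \alpha=n-k,\ d=n-1)$, and then form their ratio.

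First I would obtain the denominator directly from Proposition \ref{pro:starCost}. Substituting $M=k(n-k)$ and $\alpha=n-k$ gives $M-(k-1)\alpha=(n-k)[k-(k-1)]=n-k>0$, so the positive part is simply $n-k$. Plugging this into the lower bound of Proposition \ref{pro:starCost} yields $\sigma_{opt}=\left(\frac{n-2}{n-k}+1\right)(n-k)=(n-2)+(n-k)=2n-k-2$, which is the optimal-cost side of the ratio.

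Next I would compute the numerator, i.e., the cost of the non-optimized (minimum-bandwidth) repair without surviving-node cooperation. The given parameters correspond to an MSR point with $\alpha=M/k=n-k$, and the minimum per-node download is $\beta=\frac{M}{k(d-k+1)}=\frac{k(n-k)}{k(n-1-k+1)}=1$, so each of the $d=n-1$ surviving nodes forwards exactly one fragment to the new node. The essential point is to account for the star topology: the surviving central node reaches the new (non-central) node in a single hop at cost $1$, while each of the remaining $n-2$ surviving non-central nodes must relay its fragment through the central node at cost $2$ per fragment. Since in the non-optimized scheme every fragment travels independently, this gives $\sigma_{non-opt}=1+2(n-2)=2n-3$.

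Combining the two parts yields $g_c=\frac{2n-3}{2n-k-2}$, as claimed. The only step requiring care is the hop-count in the numerator: one must correctly recognize that, absent cooperation, each of the $n-2$ non-central helpers incurs a two-hop cost and the single central helper incurs a one-hop cost, together with the correct value $\beta=1$ from the MSR formula. This bookkeeping is the crux of the argument; by contrast, the denominator is a routine substitution of the parameters into Proposition \ref{pro:starCost}.
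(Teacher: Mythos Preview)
Your proposal is correct and follows essentially the same approach as the paper: both compute $\sigma_{opt}$ by direct substitution into Proposition~\ref{pro:starCost} and obtain $\sigma_{non-opt}=2n-3$ from $\beta=1$ together with the star-topology hop counts. The only cosmetic difference is in the bookkeeping for $\sigma_{non-opt}$: the paper groups by link, writing $(n-2)$ for the non-central-to-central hops plus $(n-1)$ for the central-to-new-node hop, whereas you group by helper node, writing $1+2(n-2)$; both sums are of course $2n-3$.
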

\proof{With the given parameters the minimum bandwidth repair is $\beta=1$ \cite{Dimk01}. Sending the repair traffic from $(n-2)$ non-central nodes to the central node costs $(n-2)$. The central node sends these $n-2$ fragments plus its own repair. This means the central node sends $(n-1)$ fragments to the new node. In total the repair cost equals to $\sigma_{non-opt}=(n-2)+(n-1)=2n-3$. For the optimal cost, we calculate from Proposition \ref{pro:starCost}, $\sigma_c = (\frac{n-2}{n-k}+1)\frac{M}{k}$.} Substituting $M=k(n-k)$ yields $\sigma_c =2n-k-2$.

Therefore, as an example for $k=n/2$, $g_c$ tends to $\frac{4}{3}$ for $n\rightarrow \infty$.

\subsection{Grid Networks}\label{sec:rstGrid}

Consider a $2\times 3$ grid network in Fig.  \ref{grid-NC}. The optimal-cost repair process
 depends on the location of the new node. The optimum cost repair for the repair in node $6$ can be found by our approach.

\begin{clm} In the repair of node $6$ in Fig. \ref{grid-NC}, the optimal-cost repair is $7$ units corresponding to the minimum-cost subgraph
subgraph $(z_{(12)}, z_{(14)}, z_{(23)}, z_{(25)}, z_{(36)},
z_{(45)}, z_{(56)})=(0, 1, 0, 1, 1, 2, 2)$.\end{clm}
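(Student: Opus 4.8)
The plan is to specialize the linear program (\ref{opt-lin}) to the repair of node $6$ in the grid of Fig.~\ref{grid-NC} and then certify feasibility and optimality of the announced subgraph separately. First I would read off the cost matrix $\underline{C}$ of the grid (unit cost on every grid edge, $\infty$ on absent links) and fix the figure's parameters at the minimum-storage point $\alpha=M/k$, so that the resulting bound will moreover be achievable. Since node $6$ is adjacent only to nodes $3$ and $5$, every repair flow reaching the new node is funnelled through the edges $z_{(36)}$ and $z_{(56)}$, while $z_{(56)}$ is itself fed by the relay paths $1\!\to\!4\!\to\!5$, $2\!\to\!5$ and $4\!\to\!5$; this relaying is exactly what couples the interior variables $z_{(14)},z_{(25)},z_{(45)}$ to the cut values, and is why a cheap ``direct-to-node-$6$'' subgraph is not admissible.

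Next I would enumerate the cut-set constraints exactly as in Section~\ref{constregion}: for the data collector attached to the new node together with each set of $k-1$ surviving nodes, the corresponding cut must be at least $M$, giving the polytope $\Psi$ of (\ref{eqn:constraint}). Using the source/$\alpha$-capacity structure of the modified information flow graph, each such cut evaluates to $(k-1)\alpha$ (the storage edges of the chosen survivors) plus the total flow on the grid edges that cross from the source side into the set formed by the new node together with the chosen survivors. The binding inequalities are those that force flow through node $5$: selections excluding node $5$ drive $z_{(56)}$ up, while selections that already contain nodes $3$ and $5$ cannot use $z_{(36)},z_{(56)}$ and instead force $z_{(25)}+z_{(45)}$ up.

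I would then verify that $(z_{(12)},z_{(14)},z_{(23)},z_{(25)},z_{(36)},z_{(45)},z_{(56)})=(0,1,0,1,1,2,2)$ lies in $\Psi$ by checking each cut, which yields $\sigma_c=7$. For the matching lower bound I would take a nonnegative combination of the cut inequalities (weak duality for (\ref{opt-lin})): three cuts whose crossing-edge sets are pairwise disjoint and together cover all seven edges already give $\sum_{(ij)}z_{(ij)}\ge 6$, and the remaining cuts, together with the fact that the $z_{(ij)}$ count whole fragments, raise the optimum to $7$; equivalently one runs the simplex method on (\ref{opt-lin}) and reads off the optimal vertex. Finally, because we are at $\alpha=M/k$, Lemma~\ref{lem_nonzero2} and Theorem~\ref{Theorem:Fieldsize} guarantee a linear network code, namely the one drawn in Fig.~\ref{grid-NC}, that realizes this subgraph, so $7$ is attained and hence optimal.

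The main obstacle is the second step: writing down the complete and correct list of cut constraints. Because the interior nodes relay and re-encode, the cut induced by a choice of $k-1$ survivors is not merely the flow entering node $6$ but the flow entering the whole collector-side set, so several variables enter each inequality and one must be careful neither to drop nor to double-count a crossing edge. Once the constraints are fixed, the optimality certificate is itself delicate: the linear-programming relaxation is attained at a fractional point strictly below $7$, so the bound of $7$ must be argued through the integrality of the fragment counts (an integer-programming or vertex argument) rather than from the relaxed LP value alone.
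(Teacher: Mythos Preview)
Your overall plan mirrors the paper's proof: write down the cost matrix for the $2\times3$ grid, list the cut constraints of (\ref{eqn:constraint}) for the first stage of repair (there are $\binom{5}{3}=10$ of them), assemble them into the coefficient matrix $\underline{L}$, and solve the resulting linear program (\ref{opt-lin}) with $M=8$, $k=4$, $\alpha=2$. The paper simply invokes the simplex method and reports the optimal vertex $(0,1,0,1,1,2,2)$ with value $7$.

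However, your last paragraph contains a genuine error. You assert that ``the linear-programming relaxation is attained at a fractional point strictly below $7$'' and that the bound must be pushed from $6$ to $7$ via integrality of the fragment counts. This is false: the \emph{continuous} LP already has optimal value $7$. A valid dual certificate is obtained by weighting the cuts
\[
z_{(56)}\ge 2,\quad z_{(45)}\ge 2,\quad z_{(14)}+z_{(36)}\ge 2,\quad z_{(14)}+z_{(23)}+z_{(25)}\ge 2,\quad z_{(23)}+z_{(25)}+z_{(36)}\ge 2
\]
with multipliers $1,1,\tfrac12,\tfrac12,\tfrac12$ respectively, which yields
\[
z_{(14)}+z_{(23)}+z_{(25)}+z_{(36)}+z_{(45)}+z_{(56)}\ \ge\ 7,
\]
and hence $\sigma_c(\underline{z})\ge 7$ since $z_{(12)}\ge 0$. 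No integer-programming argument is needed, and indeed the paper's framework (Remark~3) explicitly treats the $z_{(ij)}$ as real-valued. Your three disjoint cuts only witness $\sigma_c\ge 6$; the step you flagged as ``delicate'' is resolved not by integrality but by using fractional dual multipliers (equivalently, averaging three overlapping cuts), which is standard LP duality and exactly what the simplex method produces.
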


\proof{ See appendix C.}

In this example in the non-optimized cost approach, the minimum bandwidth for $M=8, n=6, k=4, d=n-1$ is $\beta=1$ fragment \cite{Dimk01}. Surviving nodes transmit their fragment to the new node following this approach: node $3$ by the path $\texttt{node} 1 \rightarrow \texttt{node} 2 \rightarrow \texttt{node} 3 \rightarrow 6$, node $2$ by $\texttt{node} 2 \rightarrow \texttt{ node} 3 \rightarrow 6$, node $3$ by $\texttt{node} 3 \rightarrow 6$, node $4$ by $\texttt{node} 4 \rightarrow  \texttt{node} 5 \rightarrow 6$, and finally node $5$ by transmitting a fragment on link $\texttt{node} 5 \rightarrow 6$. This gives $\sigma_{non-opt}=3+2+1+2+1=9$ units. Thus,  our approach gives a gain $g_c=\frac{9}{7}=1.2$.

Finding the lower bound of the repair
cost in a closed form for a grid network is more complicated than those for tandem
   and star networks. Yet, we can know  the repair-cost in
   a grid network will not be greater
   than the repair-cost in the  tandem topology if other conditions are the same, e.g., link costs, the number of nodes. More formally, we have
\begin{col} Optimal-cost repair for a distributed storage system in a grid
network with $n$ node coded by $(n,k)-$ MDS codes (each node stores $M/k$ fragments) leads less repair-cost than that of a tandem topology if all other conditions are the same. That is,
\begin{equation}
{\sigma_c}^{grid} \leq {\sigma_c}^{tandem}=[k(M-(k-1)\alpha)]^{+}.
\end{equation}
 \label{col:grid23}\end{col}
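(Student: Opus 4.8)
The plan is to exhibit the tandem topology as a spanning subgraph of the grid and then transplant the optimal tandem repair onto the grid without increasing its cost. First I would observe that the grid network admits a Hamiltonian path, e.g.\ the snake-shaped route that visits the $n$ nodes row by row using only grid edges. Together with the unit edge costs assumed in this section, this path is \emph{precisely} a tandem network on the same node set, all of whose edges are present in the grid (the remaining grid edges are simply additional). Since the stored codes are the same $(n,k)$-MDS codes in both cases and the failed/new node is one of the $n$ nodes lying on the path, the grid literally contains a copy of the tandem.

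Next I would invoke Proposition \ref{pro:TandemCost}: on this embedded tandem there is a feasible repair, using cooperation of the $k$ nearest surviving nodes along the line, that preserves the RCP and attains cost $[k(M-(k-1)\alpha)]^{+}$. Because that cost is independent of where the failed node sits on the line, it applies no matter which grid node failed (and at least $d=n-1\geq k$ surviving nodes are available along the path). I would then run this identical repair on the grid, assigning the same flows $z_{(ij)}$ to the path edges and zero flow to every non-path grid edge; call this lifted subgraph $\underline{z}^{\,\mathrm{grid}}$.

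The key point is that $\underline{z}^{\,\mathrm{grid}}$ is feasible for (\ref{opt-lin}) and has the same objective value. For the objective, the extra edges carry zero flow and contribute nothing to $\sigma_c$. For feasibility, recall that in the cut formulation the capacity of a repair edge equals the number of fragments it carries, so an edge with zero flow has zero capacity and cannot affect the value of any cut. Hence the capacitated information-flow graph of the grid under $\underline{z}^{\,\mathrm{grid}}$ is, as far as every cut is concerned, identical to that of the tandem under its optimal subgraph: the fixed $\alpha$-capacities of the $k-1$ selected storage edges coincide, and the only repair edges of positive capacity are the path edges. Therefore every grid cut separating the $DC$ (connected to the new node and any $k-1$ others) from the source has exactly the value of the corresponding tandem cut, which is $\geq M$, so all constraints $\underline{L}\,\underline{z}^{\,\mathrm{grid}}\geq\underline{b}$ hold.

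Putting these together, $\underline{z}^{\,\mathrm{grid}}$ is a feasible point of the grid problem whose cost equals $\sigma_c^{\mathrm{tandem}}=[k(M-(k-1)\alpha)]^{+}$; since $\sigma_c^{\mathrm{grid}}$ is the minimum over all feasible grid subgraphs, we conclude $\sigma_c^{\mathrm{grid}}\leq[k(M-(k-1)\alpha)]^{+}=\sigma_c^{\mathrm{tandem}}$. I expect the main obstacle to be the first step, namely arguing cleanly that a general grid (not just the $2\times 3$ example) contains a Hamiltonian path through the failed node, and making precise the observation that adjoining zero-capacity edges leaves every cut value unchanged, so that subgraph embedding genuinely implies containment of the feasible region.
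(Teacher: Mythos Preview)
Your proposal is correct and follows essentially the same approach as the paper: the paper's proof is the one-line observation that ``the repair process in a grid network can be reduced to that of tandem networks if we neglect some available links,'' and you have carefully unpacked that sentence by exhibiting the snake Hamiltonian path, lifting the tandem-optimal subgraph with zeros on the remaining grid edges, and checking that this lifted point is feasible for the grid LP with the same objective value. Your flagged concern about the Hamiltonian path is easily handled for rectangular grids (the snake path always visits every node), and your feasibility argument via zero-capacity edges is exactly the formalization the paper's sketch is implicitly relying on.
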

\begin{proof} The repair process  in a grid network can be reduced to that of  tandem networks if we neglect some available links which may reduce the cost. Thus, the cost of optimal-cost repair of a grid network is upper-bounded by that of a tandem topology. \end{proof}
%

\begin{figure}
 \centering
 \psfrag{a}[][][1.5]{ $\alpha$ }
 \psfrag{m1}[][][2]{ node 1 }
 \psfrag{m2}[][][2]{ node 2 }
 \psfrag{m3}[][][2]{ node 3 }
 \psfrag{m4}[][][2]{ node 4 }
 \psfrag{m5}[][][2]{ node 5 }
 \psfrag{m6}[][][2]{ node 6 }
 \psfrag{x1}[][][2]{ $x_1$ }
 \psfrag{x2}[][][2]{ $x_2$ }
 \psfrag{x3}[][][2]{ $x_3$ }
 \psfrag{x4}[][][2]{ $x_4$ }
 \psfrag{x5}[][][2]{ $x_5$ }
 \psfrag{x6}[][][2]{ $x_6$ }
 \psfrag{x7}[][][2]{ $x_7$ }
 \psfrag{x8}[][][2]{ $x_8$ }
 \psfrag{y1}[][][2]{ $y_1$ }
 \psfrag{y2}[][][2]{ $y_2$ }
 \psfrag{y3}[][][2]{ $y_3$ }
 \psfrag{y4}[][][2]{ $y_4$ }
 \psfrag{p1}[][][2]{ $p_1$ }
 \psfrag{p2}[][][2]{ $p_2$ }
 \psfrag{p3}[][][2]{ $p_3$ }
 \psfrag{p4}[][][2]{ $p_4$ }
 \psfrag{p5}[][][2]{ $p_5$ }
 \psfrag{p6}[][][2]{ $p_6$ }
 \psfrag{p7}[][][2]{ $p_7$ }
 \resizebox{6cm}{!}{\epsfbox{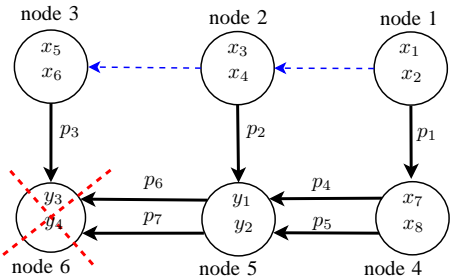}}
\caption{Optimization repair in the $2\times 3$  grid network.
Dashed lines show available links which are not used in the repair
process.}
 \label{grid-NC}
\end{figure}

\subsection{Fully connected network}\label{sec:ResTandem}
In a fully connected network all nodes have direct links to each other. This model suits for data centers where each  data center is connected to another one through a hierarchical network structure \cite{Benson}. For an example, consider a network with $5$ nodes where every pair of nodes are connected by direct links. We shall show that there is a gain in SNC also in fully connected networks. For this, first assume the transmission of one unit of data on all links costs one unit. In this case, the optimal cost approach is to transmit from surviving nodes to the new node (without surviving node cooperation) and the cost  is equivalent to the approach in  \cite{Dimk01}.
\begin{pro} For a distributed storage in Fig.\ref{Fig:FullyConnectedNet}, with parameters $(M=6,n=5,k=3,d=4,\alpha=M/k)$ in a fully connected network with equal transmission cost on links, the optimum-cost approach is  one-hop transmission, namely, directly from surviving nodes to the new node.\label{Pro-FC}\end{pro}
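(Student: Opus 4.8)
The plan is to instantiate the linear program~(\ref{opt-lin}) for this concrete network and show that its optimum is attained by the subgraph that routes each helper's contribution to the new node along a single direct hop. I would first fix, without loss of generality, that node~$5$ fails and that survivors $\{1,2,3,4\}$ regenerate the new node $N$; full connectivity together with equal link costs makes this choice immaterial by symmetry. The decision variables are the direct flows $z_{(i,N)}$ for $i\in\{1,2,3,4\}$ together with the survivor-to-survivor relay flows $z_{(ij)}$, and since every link has unit cost the objective is $\sigma_c=\sum_i z_{(i,N)}+\sum_{i\neq j}z_{(ij)}$, where the relay terms are nonnegative.

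Next I would extract the cut constraints of the modified information flow graph. For each way of placing $k-1=2$ survivors on the data-collector side, the complementary pair $\{c,d\}$ determines a cut whose value is $(k-1)\alpha+z_{(c,N)}+z_{(d,N)}$: the two chosen survivors each contribute $\alpha$ through their storage edges, while the only edges crossing into $N$ from the source side are the direct links $c\to N$ and $d\to N$. Requiring min-cut $\geq M$ then yields, for every pair $\{c,d\}\subset\{1,2,3,4\}$, the inequality $z_{(c,N)}+z_{(d,N)}\geq M-(k-1)\alpha=2$. The step I expect to require the most care is justifying that relaying cannot relax these six inequalities: any flow reaching $N$ through a \emph{chosen} survivor $a$ travels on the edge $a\to N$, which lies entirely on the data-collector side of that cut and therefore does not cross it, so it contributes nothing toward the requirement on the complementary pair. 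Hence the six constraints bind the direct flows $z_{(i,N)}$ alone, no matter how much relaying is performed.

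I would then close the argument by summing the six constraints. Each $z_{(i,N)}$ occurs in exactly three of them, giving $3\sum_i z_{(i,N)}\geq 12$, i.e. $\sum_i z_{(i,N)}\geq 4$, and since the relay flows are nonnegative, $\sigma_c\geq\sum_i z_{(i,N)}\geq 4$, which the linear program certifies as a genuine lower bound on the repair cost. Finally I would exhibit the one-hop subgraph $z_{(i,N)}=1$ for all $i$ with every relay flow equal to zero; it meets each pairwise constraint with equality and achieves $\sigma_c=4$, matching the bound and equalling the minimum-bandwidth download $d\beta=4$. To show this form is the \emph{only} optimum, I would observe that attaining $\sigma_c=4$ forces all relay flows to vanish and $\sum_i z_{(i,N)}=4$, after which the pairwise inequalities admit only the balanced allocation $z_{(i,N)}=1$ (any unbalanced choice, e.g. setting some $z_{(i,N)}=0$, forces its three partners up to at least $2$ and violates the total). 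Therefore the optimal-cost repair is precisely the one-hop transmission directly from the surviving nodes to the new node, as claimed.
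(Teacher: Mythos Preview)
Your overall strategy---extract the cut constraints of the modified information flow graph, sum them to get a lower bound on $\sigma_c$, and match the bound with the one-hop solution---is sound and is more analytic than the paper's approach, which simply writes down the constraint matrix $\underline{H}$ and reports the LP optimum. However, the cut you describe is computed incorrectly, and the resulting constraints $z_{(c,N)}+z_{(d,N)}\ge 2$ are \emph{not} valid cut inequalities.

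When you place the chosen survivors $a,b$ (more precisely, their $x_{\mathrm{out}}$ and combining points) on the DC side, the edges crossing the cut are not only the storage edges of $a,b$ and the direct links $c\!\to\!N$, $d\!\to\!N$. The survivor-to-survivor relay edges $c\!\to\!a$, $c\!\to\!b$, $d\!\to\!a$, $d\!\to\!b$ also run from the source side into the DC side and must be counted in the cut value. Your sentence ``any flow reaching $N$ through a chosen survivor $a$ travels on the edge $a\!\to\!N$, which lies entirely on the data-collector side'' identifies the correct fate of $a\!\to\!N$, but it overlooks that the upstream hop $c\!\to\!a$ is itself a crossing edge. A concrete counterexample: the subgraph $z_{(13)}=z_{(23)}=z_{(35)}=z_{(45)}=2$ (all other $z$ equal to~$0$) satisfies every genuine cut constraint---for each choice of two survivors on the DC side the min-cut is at least $6$---yet it has $z_{(15)}+z_{(25)}=0$, directly violating your claimed inequality for $\{c,d\}=\{1,2\}$.

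The repair is easy. The correct cut reads
\[
z_{(c,N)}+z_{(d,N)}+\sum_{i\in\{c,d\},\,j\in\{a,b\}} z_{(ij)} \;\ge\; M-(k-1)\alpha \;=\; 2.
\]
Summing over the six partitions, each $z_{(i,N)}$ appears three times and each relay variable $z_{(ij)}$ appears twice, giving $3\sum_i z_{(i,N)}+2\sum_{i\neq j} z_{(ij)}\ge 12$. Since $\sigma_c=\sum_i z_{(i,N)}+\sum_{i\neq j} z_{(ij)}$ and all variables are nonnegative, $3\sigma_c\ge 3\sum_i z_{(i,N)}+2\sum_{i\neq j} z_{(ij)}\ge 12$, hence $\sigma_c\ge 4$. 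The one-hop solution $z_{(i,N)}=1$ attains this bound; moreover, equality in $3\sigma_c\ge 12$ forces all relay flows to vanish, after which your uniqueness argument for the balanced allocation goes through unchanged.
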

\proof{ See appendix D.}

 Proposition \ref{Pro-FC} holds for a fully connected network with all the links having equivalent costs. In the case that there are different costs, then it might not be optimum-cost to follow the same approach (direct  transmission). In this case it might be better to exploit the multi-hop network structure. For example, consider a scenario where links from surviving nodes to the new node have high cost, however links between surviving nodes are low in cost (e.g., some nodes which are located close together in one region want to send data to the new node located very far from them). In this scenario surviving nodes can cooperate and only send the aggregated data of size $\alpha$ to the new node. More formally, we have the following result.

\begin{pro} Consider a distributed storage in Fig.\ref{Fig:FullyConnectedNet} having same parameters as Proposition \ref{Pro-FC}. The only difference is that transmitting one fragment from surviving nodes to the new node costs $3$ units, i.e., $c_{(15)}=c_{(25)}=c_{(35)}=c_{(45)}=3$. Then the optimal repair cost equals to $10$ and the optimal-cost repair is through links $z_{(23)}=z_{(34)}=z_{(45)}=2$.\end{pro}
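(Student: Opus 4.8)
The plan is to set up the linear program from Equation (\ref{opt-lin}) for this specific fully connected network and solve it, then exhibit a feasible subgraph achieving the claimed cost of $10$ together with a matching lower bound. First I would write down the cost matrix for the modified network: all inter-node costs $c_{(ij)}=1$ for surviving nodes $i,j\in\{1,2,3,4\}$, but now $c_{(15)}=c_{(25)}=c_{(35)}=c_{(45)}=3$ for the links feeding the new node $5$. With parameters $M=6$, $k=3$, $\alpha=M/k=2$, the cut constraints from the first stage of repair require that, for the DC connecting to node $5$ together with any $k-1=2$ surviving nodes, the cut value is at least $M$. Since each selected surviving node contributes $\alpha=2$ through its $in$-$out$ edge, the flow that must cross into the new node from the remaining nodes is at least $M-(k-1)\alpha=6-4=2$. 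This yields, for every choice of two surviving nodes, a constraint that the total flow entering node $5$ from the other two surviving nodes (possibly relayed) is at least $2$.

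Next I would identify the optimal routing structure. Because the direct links to node $5$ are expensive ($\cos=3$) while inter-surviving-node links are cheap ($\cos=1$), the optimizer should aggregate data along the cheap links and minimize the expensive final hop. The claim asserts the solution $z_{(23)}=z_{(34)}=z_{(45)}=2$, meaning $2$ fragments flow $2\mapsto 3$, then $2$ fragments $3\mapsto 4$, then $2$ fragments $4\mapsto 5$; this is a single aggregated stream of width $2$ reaching the new node over one expensive link. I would verify its cost: $z_{(23)}\cdot 1 + z_{(34)}\cdot 1 + z_{(45)}\cdot 3 = 2+2+6 = 10$, matching the claim. Then I would check feasibility against all the cut constraints: for the new node to recover with any two surviving nodes, I must confirm this subgraph delivers the required rank, using the achievability argument of Lemma \ref{lem_nonzero2} to guarantee a coding assignment realizing the subgraph flows while preserving the RCP.

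The main obstacle, and the real content of the proof, is establishing the matching \emph{lower bound} $\sigma_c\geq 10$, i.e., that no feasible subgraph costs less. For this I would argue that any feasible solution must route at least $\alpha=2$ units of flow across the cut separating node $5$ from all surviving nodes, so the flow on the expensive final-hop edges into node $5$ totals at least $2$, contributing at least $2\times 3=6$ to the cost. The delicate part is showing that the cheap relaying hops unavoidably add at least $4$ more: since the aggregated flow of width $2$ must originate from data that is independent across enough surviving nodes to satisfy \emph{all} the two-node cut constraints simultaneously, the flow cannot be sourced entirely at a single node adjacent to node $5$, and must traverse at least two cheap intermediate links to gather sufficient innovative content. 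I would make this rigorous by examining the LP dual or by a direct case analysis over which surviving node carries the final hop, checking that in each case the cut constraints force at least two units along two distinct cheap edges. Once both bounds meet at $10$, the proposition follows, and I would note the minimizing subgraph $z_{(23)}=z_{(34)}=z_{(45)}=2$ is (up to relabeling symmetric paths) the optimizer returned by the simplex method on Equation (\ref{opt-lin}).
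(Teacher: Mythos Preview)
Your proposal is correct and uses the same linear-programming formulation as the paper: the constraint matrix $\underline{H}$ is inherited verbatim from Proposition~\ref{Pro-FC}, only the cost vector changes to $(1,1,1,3,1,1,3,1,3,3)$, and the optimum $z_{(23)}=z_{(34)}=z_{(45)}=2$ with $\sigma_c=10$ drops out. The paper's own proof stops there---it simply reports the LP output---whereas you go further and sketch an analytical certificate (final-hop cut forces cost $\geq 6$, remaining cut constraints force two more cheap hops for $\geq 4$); that extra duality/case-analysis argument is sound and adds insight the paper does not provide, but it is an embellishment of, not a departure from, the paper's computational approach.
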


\proof{ See appendix E.}

Thus, $\sigma_{opt}=2\times 1+2\times 1+2\times 3=10$. Yet, the repair by direct transmission costs $12$ units, which is suboptimal. Hence, the optimal-cost approach, using surviving node cooperation, gains $\frac{12}{10}$ in reducing the repair-cost.\label{Pro-FC2}

\begin{figure}
 \centering
 \psfrag{m1}[][][2]{ node 1 }
 \psfrag{m2}[][][2]{ node 2 }
 \psfrag{m3}[][][2]{ node 3 }
 \psfrag{m4}[][][2]{ node 4 }
 \psfrag{m5}[][][2]{ node 5 }
 \resizebox{4cm}{!}{\epsfbox{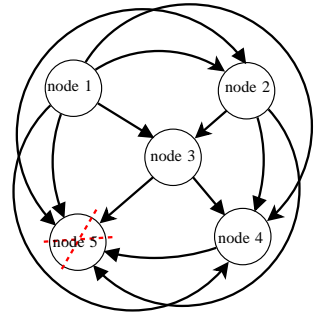}}
\caption{Optimal-cost repair in a fully connected network with $5$ nodes.}
 \label{Fig:FullyConnectedNet}
\end{figure}

\section{Conclusions}\label{sec:conclusion}

We have studied repair costs for distributed storage systems with network coding in heterogenous networks. We
formulated a linear programming problem which gives the fundamental lower bound of the repair cost. The linear programming also leads us to the optimal cost approach for the minimum storage per node, (OCMSR) codes. We prove the existence of the code by the random linear code in a large finite field. We
discussed the required field size for the existence of the code. To reduce the cost in networks, we proposed surviving node
cooperation. We also proposed an explicit construction for the exact optimal-cost repair in a tandem network. For specific networks with tandem, grid and star topology, we give the closed-form gains of optimal repair. We also discuss the gain of our approach for networks with fully connected toplogy.

\section{appendices}
\subsection{Proof of
Proposition 2}
Intuitively, every $k$ nodes can reconstruct the original file.
Thus, if $k$ nodes are allowed to combine their fragments
(possibly using SNC) then they can regenerate the new node.
When the cost of transmission on each channel is the same, the $k$
closest neighbors to the new node use the least transmission costs on the
network.  Thus, they may be the optimal-cost solution. A  stricter proof is as follows.

Assume $n$ nodes are labeled from $1$ to $n$ as shown in Fig. \ref{app1}. Without loss of generality, we assume node $1$ fails and the new node $(1^{'})$ is regenerated. We  denote the constraint region in a general form and then find the minimum repair cost value. We obtain the constraint region by the following steps:

Step $1$: connect the DC to the new node and nodes $n,\cdots,n-k+1$, as shown in Fig. \ref{app1}. Then the cut constraint is,
\begin{equation}
 z_{(21^{'})}+(k-1)\alpha \geq M \Rightarrow z_{(21^{'})} \geq M-(k-1)\alpha. \\
 \label{act1}
\end{equation}

Step $2$:  connect  the DC to the new node, node $2$ and  nodes $n,\cdots, n-k+2$. Then the cut constraint is,
\begin{equation}
 z_{(32)}+(k-1)\alpha \geq M \Rightarrow z_{(32)} \geq M-(k-1)\alpha. \\
 \label{act2}
\end{equation}

Step $3$: connect the DC to the new node, node $2, \cdots,j-1$  and  nodes $n,\cdots, n-k+j-1$ for $j=3, \cdots, k+1$. Then the cut constraint is,

\begin{equation}
 z_{(jj-1)}+(k-1)\alpha \geq M \Rightarrow z_{(jj-1)} \geq M-(k-1)\alpha,
 \label{act3}
\end{equation}

Other constraints of connecting DC to the new node and $k-1$ surviving nodes are non-active constraints \footnote{A non-active constraint is a constraint that does not affect or change the constraint region.} in the problem. Finally, the active constraints are
\begin{equation}
\begin{cases}
z_{(21^{'})} \geq M-(k-1)\alpha, \\
 z_{(32)} \geq M-(k-1)\alpha,\\
 \vdots\\

 z_{(k+1k)} \geq M-(k-1)\alpha, \\
  z_{(k+2k+1)} \geq 0,\\
  \vdots
\\
 z_{(n-1n-2)} \geq 0,\\
 z_{(nn-1)} \geq 0.\\

\end{cases}
\label{eq-active-c}
\end{equation}

Therefore, summing both sides of inequalities yields,
\begin{equation}
\sigma_c= z_{(21^{'})}+ z_{(32)}+\cdots+z_{(nn-1)}\geq k(M-(k-1)\alpha).
\end{equation}

\begin{figure}
 \centering
 \psfrag{a}[][][1.5]{ $\alpha$ }
 \psfrag{inf}[][][3]{ $\infty$ }
 \psfrag{s}[][][3]{ $S$ }
 \psfrag{DC}[][][3]{ $DC$ }
 \psfrag{vdvd}[][][3]{ $\vdots$ }
 \psfrag{a}[][][3]{ $\alpha$ }
  \psfrag{nn}[][][3]{ node $n$ }
   \psfrag{nn1}[][][3]{ node $n-1$ }
    \psfrag{nnk}[][][3]{ node $n-k+1$ }
     \psfrag{nk}[][][3]{ node $k$ }
      \psfrag{n2}[][][3]{ node $2$ }
       \psfrag{n1}[][][3]{ node $1$ }
        \psfrag{n1p}[][][3]{ node $1^{'}$ }
         \psfrag{z21p}[][][3]{  $z_{21^{'}}$ }
         \psfrag{cut}[][][3]{  Cut }
 \resizebox{8cm}{!}{\epsfbox{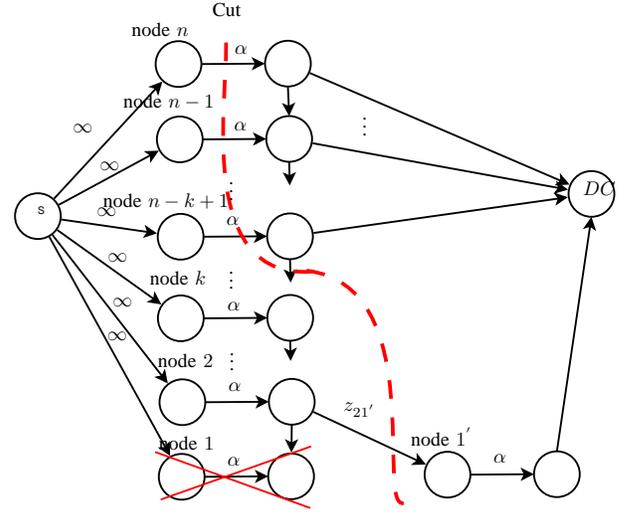}}
\caption{Cut in tandem network. This figure corresponds to the cut of $z_{21^{'}}+(k-1)\alpha\geq M$.}
  \label{app1}
\end{figure}
\subsection{Proof of
Proposition 3}

We draw the information flow graph, as shown in Fig. \ref{app2}, and prove the results by the following steps:
\begin{itemize}
\item Step 1: connect the DC to the new node, and nodes $n,\cdots,n-k+1$, as shown in Fig. \ref{app2}. Then the cut constraint is,
\begin{equation}
z_{(21^{'})}+(k-1)\alpha \geq M,\\
\Rightarrow z_{(21^{'})} \geq M-(k-1)\alpha.
\end{equation}

\item Step 2: connect the DC to the new node, and node $2$ and $(k-2)$ other nodes. We see for all $\binom{n-2}{k-2}$ selections of  other nodes, each $z_{(ij)}$ appears $\binom{n-3}{k-2}$. Thus if we add both sides of the constraints, we have
 \begin{equation}
\binom{n-3}{k-2}(z_{n2}+\cdots+z_{32})+\binom{n-2}{k-2}(k-1)\alpha \geq \binom{n-2}{k-2}M,
\end{equation}
 \begin{equation}
\Rightarrow \sigma_c-z_{(21^{'})} \geq \frac{\binom{n-2}{k-2}}{\binom{n-3}{k-2}}( M-(k-1))\alpha.
\end{equation}
\item Step 3: Any other cut constraint is non-active constraint.
\end{itemize}
therefore,
\begin{equation}
\sigma_c \geq (\frac{\binom{n-2}{k-2}}{\binom{n-3}{k-2}}+1) (M-(k-1)\alpha)=(\frac{n-2}{n-k}+1)(M-(k-1)\alpha).
\end{equation}

\begin{figure}
 \centering
 \psfrag{a}[][][1.5]{ $\alpha$ }
 \psfrag{inf}[][][3]{ $\infty$ }
 \psfrag{s}[][][3]{ $S$ }
 \psfrag{DC}[][][3]{ $DC$ }
 \psfrag{vdvd}[][][3]{ $\vdots$ }
 \psfrag{a}[][][3]{ $\alpha$ }
  \psfrag{nn}[][][3]{ node $n$ }
   \psfrag{nn1}[][][3]{ node $n-1$ }
    \psfrag{nnk}[][][3]{ node $n-k+1$ }
     \psfrag{nk}[][][3]{ node $k$ }
      \psfrag{n2}[][][3]{ node $2$ }
       \psfrag{n1}[][][3]{ node $1$ }
        \psfrag{n1p}[][][3]{ node $1^{'}$ }
         \psfrag{z21p}[][][3]{  $z_{21^{'}}$ }
          \psfrag{cut}[][][3]{  Cut }
 \resizebox{8cm}{!}{\epsfbox{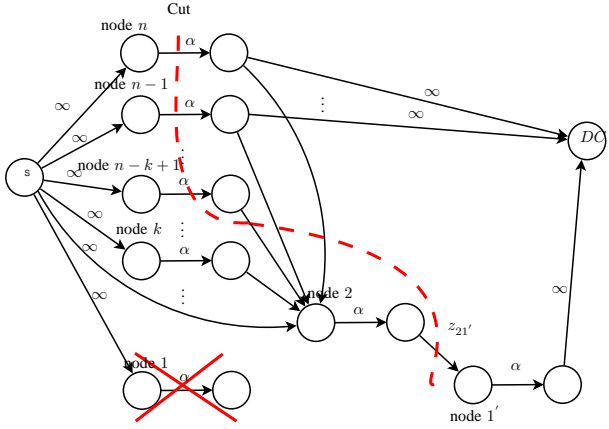}}
\caption{Cut analysis in star network. node $2$ is central node. This figures corresponds to the cut constraint of $z_{(21^{'})}+(k-1)\alpha \geq M$. }
  \label{app2}
\end{figure}

\subsection{Proof of
Claim 1}

The corresponding cost matrix ($\underline{C}$) for the repair on node $6$ is as below. We formulate the
linear optimization problem as follows. There are $\binom{6-1}{4-1}=10$ active cut constraints. Fig. \ref{Appx_C1} shows one of these cut constraints.

\begin{equation}
\underline{C}=
\begin{pmatrix}
0 &1 &\infty &1 &\infty &\infty \\
\infty &0 &1 &\infty &1 &\infty \\
\infty &\infty &0 &\infty &\infty &1 \\
\infty &\infty &\infty &0 &1 &\infty \\
\infty &\infty &\infty &\infty &0 &1
\end{pmatrix}.
\end{equation}

\begin{figure}[h!]
 \psfrag{a}[][][3]{ $\alpha$ }
 \psfrag{inf}[][][3]{ $\infty$ }
 \psfrag{s}[][][3]{ $S$ }
 \psfrag{DC}[][][3]{ $DC$ }
 \psfrag{a}[][][3]{ $\alpha$ }
 \psfrag{z12}[][][3]{  $z_{(12)}$ }
 \psfrag{z23}[][][3]{  $z_{(23)}$ }
 \psfrag{z14}[][][3]{  $z_{(14)}$ }
 \psfrag{z25}[][][3]{  $z_{(25)}$ }
 \psfrag{z36}[][][3]{  $z_{(36)}$ }
 \psfrag{z56}[][][3]{  $z_{(56)}$ }
      \psfrag{n2}[][][3]{ node $2$ }
       \psfrag{n1}[][][3]{ node $1$ }
        \psfrag{n2}[][][3]{ node $2$ }
         \psfrag{n3}[][][3]{ node $3$ }
          \psfrag{n4}[][][3]{ node $4$ }
           \psfrag{n5}[][][3]{ node $5$ }
            \psfrag{n6}[][][3]{ node $6$ }
             \psfrag{new6}[][][3]{ new node }
        \psfrag{n1p}[][][3]{ node $1^{'}$ }
         \psfrag{z21p}[][][3]{  $z_{21^{'}}$ }
          \psfrag{cut}[][][3]{  Cut }
     \resizebox{8cm}{!}{\epsfbox{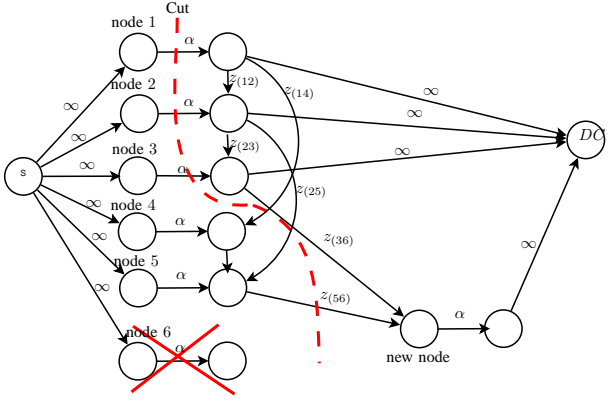}}
      \caption{Cut analysis in the $2\times3$ grid network with $n=6$, $k=4$. The cut in this figure corresponds to the inequality, $z_{(56)}\geq M-3\alpha$}
      \label{Appx_C1}
\end{figure}

\begin{equation}
 \begin{array}{ll}
\mbox{$\mathbf{\min}$} &\sigma_c(\underline{z}) \\
\mbox{$\mathbf{s.t.}$} & \underline{H} \text{ }.\underline{z}\geq (M-3\alpha)\underline{\mathbf{1}} \\
& z_{ij}\geq 0, \label{eqopt3}
\end{array}
\end{equation}
where $\underline{H}=$
\[
\left[\begin{array}{ccccccc}
0 & 0 & 0 & 0 & 1 & 0  & 1 \\
0 & 0 & 1 & 1 & 1 & 0  & 0 \\
0 & 0 & 1 & 0 & 0 & 0  & 1 \\
0 & 1 & 0 & 0 & 1 & 0  & 0 \\
1 & 1 & 0 & 0 & 0 & 0  & 1 \\
0 & 1 & 1 & 1 & 0 & 0  & 0 \\
0 & 0 & 0 & 0 & 1 & 1  & 0 \\
0 & 0 & 0 & 0 & 0 & 0  & 1 \\
0 & 0 & 1 & 1 & 0 & 1  & 0 \\
0 & 0 & 0 & 0 & 0 & 1  & 0
\end{array}\right]
\text{ and } \underline{z}=
 \begin{pmatrix}
z_{12}  \\
z_{14}\\
z_{23}\\
z_{25} \\
z_{36} \\
z_{45} \\
z_{56} \\

\end{pmatrix},\]
where $\underline{\mathbf{1}}$ is a $10 \times 1$ vector with all the entities equal to one). We note that matrix $H$ is resulted from cut-set analysis in the first stage of repair. For example the cut mentioned in Fig. \ref{Appx_C1} constructs the $8$-th row in matrix $H$, corresponding to inequality $z_{(56)}\geq M-3\alpha$.

Solving this linear optimization problem (e.g. by the simplex method) for the $M=8$ and $\alpha=2$ results:\\
$\sigma_c=7$;\\
$(z_{(12)}, z_{(14)}, z_{(23)},
z_{(25)}, z_{(36)}, z_{(45)}, z_{(56)})=\\(0, 1, 0, 1, 1, 2, 2).\blacksquare$

\subsection{Proof of
Proposition 4}
The cost matrix corresponding to regenerate the new node on node $5$ equals to,
\begin{equation}
\underline{C}=
\begin{pmatrix}
0 &1 &1 &1 &1\\
\infty &0 &1 &1 &1\\
\infty &\infty &0 &1 &1\\
\infty &\infty &\infty &0 &1
\end{pmatrix}.
\end{equation}
The cut analysis in first stage of repair give the following linear programming problem,
\begin{equation}
 \begin{array}{ll}
\mbox{$\mathbf{\min}$} &\sigma_c(\underline{z})=\mathbf{\underline{1}}^T \underline{z}\\
\mbox{$\mathbf{s. t.}$} & \underline{H} \text{ }.\underline{z}\geq (M-2\alpha) \mathbf{\underline{1}}, \\
& z_{(ij)}\geq 0, \label{eqopt1}
\end{array}
\end{equation}
where $\mathbf{\underline{1}}$ is a column vector with $10$ elements all equal to $1$, and T indicates vector transpose, and vector $\underline{z}=(z_{(12)},z_{(13)},z_{(14)},z_{(15)},z_{(23)},z_{(24)},z_{(25)},z_{(34)},z_{(35)},z_{(45)})$, and $\underline{H}$ calculated from cut set analysis equals to:
\begin{equation*}
\underline{H}=\left[\begin{array}{cccccccccc}
0 & 0 & 0 & 1 & 1 & 1  & 1 &0 & 0 & 0\\
1 & 0 & 0 & 1 & 0 & 0  & 0 &1 & 0 & 0\\
0 & 0 & 0 & 0 & 0 & 0  & 1 &1 & 1 & 0\\
1 & 0 & 0 & 1 & 0 & 0  & 0 &0 & 0 & 1\\
1 & 0 & 0 & 0 & 1 & 1  & 0 &0 & 0 & 0\\
0 & 0 & 0 & 0 & 0 & 0  & 0 &0 & 1 & 1\\
\end{array}\right].
\end{equation*}
Solving the linear programming problem for $M=6,\alpha=2$ gives the minimum repair cost as,
$z_{(15)}=z_{(25)}=z_{(35)}=z_{(45)}=1$ and zero for other links. This shows in this example the minimizing repair cost approach gives the same minimum value as minimizing repair bandwidth.

\subsection{Proof of
Proposition 5}
The cut-analysis is same as Proposition 4. The only difference is in the cost function. Thus, the optimization problem would be,
\begin{equation}
 \begin{array}{ll}
\mbox{$\mathbf{\min}$} &\sigma_c(\underline{z})=(1,1,1,3,1,1,3,1,3,3) \underline{z}\\
\mbox{$\mathbf{s. t.}$} & \underline{H} \text{ }.\underline{z}\geq (M-2\alpha) \mathbf{\underline{1}}, \\
& z_{(ij)}\geq 0, \label{eqopt22}
\end{array}
\end{equation}
where $\underline{H},M,\alpha$ are same as Proposition $4$. Solving the linear programming problem gives the optimum solution as  $z_{(23)}=z_{(34)}=z_{(45)}=2$ and $\sigma_c=(1,1,1,3,1,1,3,1,3,3) \underline{z}=z_{(23)}\times 1+z_{(34)}\times 1+ z_{(45)}\times 3 =10$.

\section*{Acknowledgement}
The authors would like to thank Dr. Majid Nasiri Khormoji for his helpful discussion and comments. The authors would also like to thank the anonymous reviewers
for their valuable comments and suggestions to improve the
quality of the paper.


\begin{thebibliography}{99}
\bibitem{Ghem01}S. Ghemawat, H. Gobioff, and S.-T. Leung, "The Google file system," in
Proc. of the 19th ACM SIGOPS Symp. on Operating Systems Principles
(SOSP'03), Oct. 2003.
\bibitem{Kub01} J. Kubiatowicz et al., "OceanStore: an architecture for global-scale
persistent storage," in Proc. 9th Int. Conf. on Architectural Support for
programming Languages and Operating Systems (ASPLOS), Cambridge,
MA, pp. 190$-$201 Nov. 2000.
\bibitem{Bhagwan01} R. Bhagwan, K. Tati, Y. Cheng, S. Savage, and G. Voelker, "Total
recall: system support for automated availability management," in Proc.
of the 1st Conf. on Networked Systems Design and Implementation, San
Francisco, Mar. 2004.
\bibitem{Dimk03} A.G. Dimakis, V. Prabhakaran, K. Ramchandran, "Ubiquitous access
to distributed data in large-scale sensor networks through decentralized
erasure codes," Proc. of IPSN, pp. 111-117, Apr. 2005.
\bibitem{Dimk01} A. G. Dimakis, P. B. Godfrey, Y. Wu, M. J. Wainwright, and K.
Ramchandran, ``Network coding for distributed storage systems,"
\textit{IEEE Trans. on Info. Theory}, vol. 56, no. 9, pp. 4539–-4551, Sep. 2010.
\bibitem{Wu01}Y. Wu, ``Existence and construction of capacity-achieving network codes for distributed storage," \textit{IEEE Journal on Selected
Areas in Commun.},  vol. 28, no. 2, pp. 277-288, Feb. 2010.
\bibitem{Jain01} S. Jain, K. Fall, and R. Patra, ``Routing in a delay tolerant network," \textit{Proc. ACM SIGCOMM}, 2004.
\bibitem{InfoFlow} R. Ahlswede, N. Cai, S. Y. Robert Li and R. W. Yeung,
``Network information flow," \textit{IEEE Trans. on Info. Theory}, Vol. 46,
No.4, pp. 1204-1216, Jul. 2000.
\bibitem{KoMed} R. Koetter and M. Medard, ``An algebraic approach to network coding,"
\textit{IEEE/ACM Trans. Networking}, vol. 11, no. 5, pp. 782-795, Oct.
2003.
\bibitem{Rashmi} K. V. Rashmi, Nihar B. Shah, P. V.
Kumar and K. Ramchandran, ``Explicit construction of optimal exact
regenerating codes for distributed storage," in Allerton Conf., pp. 1243-1249,
 Sep. 2009.
 \bibitem{Shah02} N. B. Shah, K. V. Rashmi, P. V. Kumar, and K. Ramchandran, ``Distributed storage codes with repair-by-transfer and nonachievability of interior points on the storage-bandwidth tradeoff," \textit{IEEE Trans. on Info. Theory}, vol. 58, no. 3, pp.1837-1852, Mar 2012.
\bibitem{Pap01} D. S. Papailiopoulos, and A. G. Dimakis, ``Locally repairable codes," in \textit{ IEEE International Symposium on Information Theory} (ISIT) 2012, Jul. 2012.
\bibitem{Yuchong01} Y. Hu, Y. Xu, X. Wang, Ch. Zhan and P. Li,
``Cooperative recovery of distributed storage systems from
multiple losses with network coding," \textit{IEEE Journal on Selected
Areas in Commun.}, vol.28, pp. 268-276, Feb. 2010.
\bibitem{akh01}S. Akhlaghi, A. Kiani and M. R.
Ghanavati, ``Cost-bandwidth tradeoff in distributed storage
systems," arXiv:1004.0785v2 [cs.IT],  Apr. 2010.
\bibitem{Li01} J. Li, S. Yang, X. Wang, and B. Li, ``Tree-structured data regeneration in distributed storage systems with regenerating codes," \textit{Proceeding of IEEE INFOCOM} 2010, pp. 1-9, Mar. 2010.
\bibitem{Maj01} M. Gerami, M. Xiao, and M. Skoglund, ``Optimal-cost repair in multi-hop distributed storage systems," in \textit{ IEEE International Symposium on Information Theory} (ISIT) 2011, pp. 1437--1441, Aug. 2011.
\bibitem{Benson} T. Benson, A. Akella, and D. A. Maltz,``Network traffic characteristics of data centers in the wild," \textit{IMC'10}, Nov. 2010.
\bibitem{Dimk02} Alexandros G. Dimakis, K. Ramchandran, Y. Wu, and  C. Suh, ``A Survey on Network Codes
for Distributed Storage,"  \textit{Proceedings of the IEEE}, Vol. 99, pp: 476 - 489, Mar. 2011.
\bibitem{Costa01} P. Costa, A. Donnelly, A. Rowstron, G. O'Shea, `` Camdoop: Exploiting In-network Aggregation for Big Data Applications," \textit{NSDI 2012}.
\bibitem{Abu01} H. Abu-Libdeh, P. Costa, A. Rowstron, ``Symbiotic Routing in Future Data Centers," \textit{SIGCOMM'10}, Sep. 2010.
\bibitem{Kong01} Z. Kong, S. A. Aly, and E. Soljanin, ``Decentralized coding algorithms
for distributed storage in wireless sensor networks," IEEE Journal of
Selected Areas in Communications, vol. 28, pp. 261-268, Feb. 2010.
\bibitem{Ken01} K. W. Shum, ``Cooperative regenerating codes for distributed storage
systems," In \textit{IEEE Int. Conf. Comm.} (ICC), Kyoto, Jun. 2011.
\bibitem{Lun01} D. S. Lun, N. Ratnakar, M. Medard, D. Karger, T. Ho, E. Ahmad and F. Zhao,
  ``Minimum-cost multicast over coded packet networks," \textit{IEEE Trans. on Info.
  Theory,} vol.52, pp 2608-2623, Jun. 2006.
  \bibitem{Jiang01} A. Jiang, ``Network coding for joint storage and
transmission with minimum cost," in \textit{IEEE International Symposium on Information Theory} (ISIT) 2006, pp. 1359-1363, Jul. 2006.
  \bibitem{Ho01} T. Ho, M. Médard, R. Koetter, D. R. Karger, M. Effros, J. Shi, and B.
Leong, ``A Random Linear Network Coding Approach to Multicast,"
\textit{IEEE Trans. on Info. Theory}, vol.52, pp 4413-4430, Oct. 2006.
\bibitem{Jagi}S. Jaggi, P. Sanders, P. A. Chou, M. Effros, S.
Egner, K. Jain, and L. Tolhuizen, ``Polynomial time algorithms for
network code construction," \textit{IEEE Trans. on Info. Theory}, vol.51,
pp.1973-1982, Jun. 2005.
\bibitem{Optbook} S. Boyd and L. Vandenberghe,
\textit{Convex Optimization}, Cambridge University Press, 2004.
\bibitem{Raymond} R. W. Yeung, \textit{Information theory and network coding}, Springer, Aug. 2008.
\bibitem{Shah01}  N. B. Shah, K. V. Rashmi, , and P. V. Kumar, ``A Flexible Class of Regenerating Codes for Distributed Storage," in Proc. IEEE
International Symposium on Information Theory (ISIT), Austin, Jun. 2010, pp. 1943–1947.
\bibitem{Maj02} M. Gerami, M. Xiao, C. Fischione and M. Skoglund, ``Decentralized Minimum-cost Repair for Distributed Storage Systems,"  in Proc. IEEE International Conferences on Communications (ICC) , 2013. (To Appear).
\bibitem{Maj03} M. Gerami and M. Xiao, ``Repair for Distributed Storage Systems with Erasure Channels," in Proc. IEEE International Conferences on Communications (ICC) , 2013. (To Appear).





\end{thebibliography}
\end{document}